\documentclass[numbers=enddot,12pt,letterpaper,final,onecolumn,notitlepage]{scrartcl}
\usepackage[headsepline,footsepline,manualmark]{scrlayer-scrpage}
\usepackage[height=8.5in,letterpaper,hmargin={1.1in,1.1in}]{geometry}
\usepackage[all,cmtip]{xy}
\usepackage{amsfonts}
\usepackage{amssymb}
\usepackage{amsmath}
\usepackage{comment}
\usepackage{needspace}
\usepackage{color}
\usepackage[breaklinks=true]{hyperref}
\usepackage[sc]{mathpazo}
\usepackage[T1]{fontenc}
\usepackage{amsthm}

%%%%%%%%%%%%%%%%%%%%%%%%%%%%%%%%%%%%%%%%%%%%%%%%%

\makeatletter
\DeclareOldFontCommand{\rm}{\normalfont\rmfamily}{\mathrm}
\DeclareOldFontCommand{\sf}{\normalfont\sffamily}{\mathsf}
\DeclareOldFontCommand{\tt}{\normalfont\ttfamily}{\mathtt}
\DeclareOldFontCommand{\bf}{\normalfont\bfseries}{\mathbf}
\DeclareOldFontCommand{\it}{\normalfont\itshape}{\mathit}
\DeclareOldFontCommand{\sl}{\normalfont\slshape}{\@nomath\sl}
\DeclareOldFontCommand{\sc}{\normalfont\scshape}{\@nomath\sc}
\makeatother

\newif\iflongmode  \longmodefalse

\def\fnote{\footnote}

\def\Abstract#1\par{\vskip 50pt\begin{abstract}#1\end{abstract}}

\def\Section#1#2.{\section{#2.}\label{#1se*}\iflongmode\setcounter{equation}{0}\fi}

\def\Subsection#1#2.{\subsection{#2.}\label{#1se*}}

\newtheorem{Theorem}{Theorem}

\newtheorem{Lemma}[Theorem]{Lemma}

\def\Th#1{\stdeclare{Theorem}{#1}} \def\th{\stwrite{Theorem}}
\def\ths#1#2{Theorems \stn{#1} and~\stn{#2}}
\def\Lm#1{\stdeclare{Lemma}{#1}}

\def\Df#1{\exdeclare{Definition}{#1}}

\def\Remark#1\par{\bigskip \is{Remark}: \ #1\par\medbreak}

\def\exdeclare#1#2#3\par{\begin{#1}\label{#2}{\rm #3}\end{#1}\rm}
\def\stdeclare#1#2#3\par{\begin{#1}\label{#2}{#3}\end{#1}}

\def\stwrite#1#2{#1 \stn{#2}}  
\def\stn{\ref}

\def\Labl#1{{\it#1\/}:\enspace}
\def\Label#1{\medbreak\noindent\Labl{#1}}
\def\Labelpar #1.#2\par{\Label{#1}#2\par\medbreak}

\def\Ack #1\par{\Labelpar Acknowledgments.#1\par}

\def\andq{\roq{and}}

\def\Proof{\Label{Proof}}
\def\qed{\hfill\parfillskip=0pt {\it Q.E.D.}\par\parfillskip=0pt plus 1fil\medbreak}

\def\Eq#1$$#2$${\StEq#1  \EnEq{#2}}
\def\StEq#1 #2\EnEq#3{\begin{equation}\label{#1eq*} #3\end{equation}}

\newdimen\eqjot \eqjot = 1\jot
\def\openupeq{\openup \the\eqjot}
\newdimen\tablejot \tablejot = 1\jot

\def\openuptable{\openup \the\tablejot}
\def\addtab#1={#1\;&=}

\def\eqrefz#1{\ref{#1eq*}}
\def\eq#1{{\rm (\eqrefz{#1})}}
\def\eqe{equation~\eq}   
\def\eqs#1#2{\eq{#1},~\eq{#2}}

\def\eqc#1#2{\iflongmode\eq{#1},~\eq{#2}\else{\rm (\eqrefz{#1}, \eqrefz{#2})}\fi}

\def\eqr#1#2{\iflongmode\eq{#1}--\eq{#2}\else{\rm (\eqrefz{#1}--\eqrefz{#2})}\fi}
\def\eqf{formula~\eq}

\def\qaeq#1#2{{\def\\{&}\vcenter{\openupeq\halign{$\displaystyle
   ##\hfil$&&\hskip#1pt$\displaystyle##\hfil$\cr #2\cr}}}}
 
\def\req{\qaeq{30}}  
\def\qeq{\qaeq{20}}

\def\ezeq#1#2#3{{\def\\{\cr#1}\vcenter{\openupeq \halign{$\displaystyle 
   \hfil##$&$\displaystyle##\hfil$&&\hskip#2pt$\displaystyle##\hfil$\cr#1#3\cr}}}}
\def\eaeq{\ezeq\addtab}

\def\eeq{\eaeq{20}}  

\def\caeq#1#2{{\def\\{\cr}\vcenter{\openupeq \halign
    {$\hfil\displaystyle ##\hfil$&&$\hfil\hskip#1pt\displaystyle ##\hfil$\cr#2\cr}}}}
\def\ceq{\caeq{20}}  

\def\iaeq#1#2#3{{\def\\{\cr\hskip#1pt}\vcenter{\openupeq\halign{$\displaystyle
   ##\hfil$&&\hskip#2pt$\displaystyle##\hfil$\cr #3\cr}}}}
\def\ibeq#1{\iaeq{#1}{#1}}

\def\crh#1{{}\cr\hskip#1pt{}}
   \def\crhQ{\crh{40}}

\def\qarray#1{\null\,\vcenter{\normalbaselines\mathsurround=0pt
    \ialign{\hfil$##$\hfil&&\quad\hfil$##$\hfil\crcr\mathstrut\crcr\noalign{\kern-\baselineskip}
      #1\crcr\mathstrut\crcr\noalign{\kern-\baselineskip}}}\,}
% @Peter: I've renamed \array into \qarray, since it used to break the cases environment.
\def\parray#1{\left(\qarray{#1}\right)}
\def\omatrix#1{{\def\\{\cr}\parray{#1\cr}}}

\def\where{\roq{where}}
\def\withq{\roq{with}}
\def\forq{\roq{for}}
 
\def\dsty{\displaystyle}
\def\rg#1#2{#1=1,\ldots,#2} 
\def\ps#1{^{(#1)}}

\newif\ifnames \namesfalse
\def\namez#1;{\namesfalse\namezz#1,;,} 
\def\namezz#1,#2,#3;{\ifx,#3,\ifnames \ and\fi\else \ifnames,\fi\fi #2 \ignorespaces#1\ifx,#3,\else \namestrue\namezz #3;\fi}

\def\key#1 #2\par{\bibitem{#1} #2\par}

\def\book#1;#2;#3\par{\namez#1;{\it #2},#3.}
\def\paper#1;#2;#3; #4(#5)#6\par{\namez#1;{\it #2},{#3}\if a#4, to appear.\else
     { \bf #4}(#5)\if n#6.\else,#6.\fi\fi}
\def\inbook#1;#2;#3;#4\par{\namez#1;#2, {\sl in\/}:{\it#3},#4.}
\def\appx#1;#2;#3;#4\par{\namez#1;#2, {\sl appendix in\/}:{\it #3},#4.\par }
\def\thesis#1;#2;#3\par{\namez#1;{\it #2}, Ph.D.~Thesis,#3.}
\def\uthesis#1;#2;#3\par{\namez#1;{\it #2}, Ph.D.~Thesis, University of Minnesota,#3.}
\def\preprint#1;#2;#3\par{\namez#1;#2, preprint,#3.}
\def\upreprint#1;#2;#3\par{\namez#1;#2, preprint, University of Minnesota,#3.}
\def\prepare#1;#2;#3\par{\namez#1;#2, in preparation.}
\def\personal#1;#2\par{\namez#1; personal communication,#2.}
\def\submit#1;#2;#3\par{\namez#1;#2, submitted.}
\def\other#1\par{#1.}

\def\nosemiz#1;{#1}

\def\psk{\hskip2pt}

\def\rf#1{{\rm \rfzzz#1;\rfzzz}}
\def\rfzzz#1;#2\rfzzz{\ifx;#2;\cite{#1}\else\cite[\nosemiz\ignorespaces#2]{#1}\fi}
\def\ie{i.e., }
\def\cf{cf.~} 

\def\is#1{{\em #1}} 
\def\bo#1{{\bf #1}}

\def\ro#1{{\rm #1}} \def\rbox#1{\hbox{\rm #1}}
\def\roh{\rbox}

\def\rox#1{\quad \rbox{#1}\quad }
\def\roxx#1{\ \quad \rbox{#1}\quad \ }

\def\roq#1{\qquad \rbox{#1}\qquad }

\def\slbox#1{\hbox{\sl #1}}

\def\soq#1{\qquad \slbox{#1}\qquad }

\def\R{\mathbb R}  
\def\Rx{\R\mkern1mu^}  \def\Cx{\C\mkern1mu^}
\def\Rn{\Rx n}  \def\Rk{\Rx k}

\def\ldotsx{\ \ldots\ } \def\cdotsx{\ \cdots\ }

\def\subs #1#2{#1_1,\ldots,#1_{#2}}
\def\psubs #1#2{(#1_1,\ldots,#1_{#2})}
\def\rvector#1{(\:#1\:)}
\def\tvector#1{\rvector{#1}^T}
\def\ptvector#1{\pa{#1}^T}
\def\rvectors#1#2{\rvector{#1_1,#1_2,\ldots,#1_{#2}}}
\def\tvectors#1#2{\rvectors{#1}{#2}^T}
\def\ip#1#2{\langle \,{#1}\,;{#2}\,\rangle}
\def\ipdpp#1#2{(#1)\cdot (#2)}
\def\Ip#1#2{\left \langle \,{#1}\,;\,{#2}\,\right \rangle}
\def\nnorm#1{ \|\,{#1}\,\|} 
\def\mx#1{#1\times #1} \def\mxp#1{(#1)\times(#1)} 

\def\f#1{{\textstyle \frac1{#1}}} 
\def\fra#1{\frac{1}{#1}}

\def\Sumu#1{\sum _{#1}\>}
\def\Summ#1#2#3{\sum _{#1\:=\:#2}^{#3}\>}
\def\Sum#1#2{\Summ{#1}1{#2}}  

\def\operator#1{\expandafter\def\csname#1\endcsname{\mathop{\rm #1}\nolimits}}
\operator{rank}

                       \let\pa\parens
             
        \let\bpa\bparens

  \let\bc\braces

\def\ostrut#1#2{\hbox{\vrule height #1pt depth #2pt width 0pt}}
\def\sstrut{\ostrut0} \def\ustrut#1{\ostrut{#1}0}
\def\msk#1{\mskip #1 mu} \def\:{\mskip2mu}

\def\upth{-{th} }  \def\upst{-{st} }  
\def\xth #1{$#1$\upth}  \def\xst #1{$#1$\upst}
  
\def\ith{\xth i}    \def\kth{\xth k}
  \def\nth{\xth n}    

\def\Chat{\widehat C{}}

\def\Mhat{\widehat M{}}

\def\Ctilde{\widetilde C{}}

\font\tensmc=cmcsc10 scaled 1200  \def\smc{\tensmc}
\def\Mathematica{{\smc Mathematica}} \def\Maple{{\smc Maple}} 

%%%%%%%%%%%%%%%%%%%%%%%%%%%%%%%%%%%%%%%%%%%%%%%%%
\def\p{\bo p} \def\x{\bo x} \def\y{\bo y}  \def\z{\bo z}
\def\xh{\widehat {\x}}

\def\Ring{{\cal R}} \def\Quo{{\cal Q}} \def\Alg{{\cal A}}

\def\o{\bo 0} \def\szerO{\setb\o} 
\def\Pt{\widetilde P}

\def\Rd{\Rx d}

\def\Cx{{\cal K}} \def\Cn{\Cx\ps n} \def\Cno{\Cx_0\ps n}
\def\H{{\cal H}}

\let\ip\ipdpp
\def\esp#1{e_{#1}}

\long\def\ignore#1{}

\def\setb#1{\{#1\}}
\def\set#1#2{\{#1, #2\}}
\def\setm#1#2{\{#1\mid #2\}}

%\newcommand{\set}[1]{\left\{ #1 \right\}}
% $\set ...}$ yields $\left\{ ... \right\}$.
\newcommand{\abs}[1]{\left| #1 \right|}
% $\abs{...}$ yields $\left| ... \right|$.

% $\tup{...}$ yields $\left( ... \right)$.
\newcommand{\ive}[1]{\left[ #1 \right]}
% $\ive{...}$ yields $\left[ ... \right]$.

\ihead{The $n$ Body Matrix and its Determinant}
\ohead{\today}

\begin{document}

\title{The $n$ Body Matrix and its Determinant}
\date{\today}
\author{Darij Grinberg$^1$ and Peter J. Olver$^2$\\\\
\footnotesize 1. School of Mathematics, University of Minnesota, Minneapolis, MN 55455\\ \footnotesize{\texttt{darijgrinberg@gmail.com}}\qquad  \footnotesize{\url{http://www.cip.ifi.lmu.de/~grinberg/}} \\
\footnotesize 2. School of Mathematics, University of Minnesota, Minneapolis, MN 55455\\ \footnotesize{\texttt{olver@umn.edu}}\qquad  \footnotesize{\url{http://www.math.umn.edu/~olver}}
}

\maketitle
%%%%%%%%%%%%%%%%%%%%%%%%%%%%%

\vskip-.52in
\Abstract  The primary purpose of this note is to prove two recent conjectures concerning the $n$ body matrix that arose in recent papers of Escobar-Ruiz, Miller, and Turbiner on the classical and quantum $n$ body problem in $d$-dimensional space.  First, whenever the positions of the masses are in a nonsingular configuration, meaning that they do not lie on an affine subspace of dimension $\leq n-2$, the $n$ body matrix is positive definite and, hence, defines a Riemannian metric on the space coordinatized by their interpoint distances.  Second, its determinant can be factored into the product of the order $n$ Cayley--Menger determinant and a mass-dependent factor that is also of one sign on all nonsingular mass configurations.  The factorization of the $n$ body determinant is shown to be a special case of an intriguing general result proving the factorization of determinants of a certain form.

\vfill\eject

\Section i The $n$ Body Matrix.

The \is{$n$ body problem}, meaning the motion of $n$ point masses (or point charges) in $d$-dimensional space under the influence of a potential that depends solely on pairwise distances, has a venerable history,
capturing the attention of many prominent mathematicians, including Euler, Lagrange, Dirichlet, Poincar\'e, Sundman, etc., \rf{SM, Whittaker}.  The corresponding quantum mechanical system, obtained by quantizing the classical Hamiltonian to form a Schr\"odinger operator, has been of pre-eminent interest since the dawn of quantum mechanics, \rf{LLqm}.  

In three recent papers, \rf{MTE,TME,TMEd}, Escobar-Ruiz, Miller, and Turbiner  made the following remarkable observation.  Once the center of mass coordinates have been separated out, the quantum $n$ body Schr\"odinger operator decomposes into a ``radial'' component that depends only upon the distances between the masses plus an ``angular'' component that involves the remaining coordinates and annihilates all functions of the interpoint distances.  
Moreover, the radial component is gauge equivalent to a second order differential operator which, as we will prove, is the Laplace--Beltrami operator on a certain Riemannian manifold coordinatized by the interpoint distances, whose geometry is as yet not well understood.  This decomposition allows one to separate out the ``radial'' eigenstates that depend only upon the interpoint distances from the more general eigenstates that also involve the angular coordinates.  A similar
%, previously unknown  (except in the very special case $n=3$ dating back to Lagrange),  
separation arises in the classical $n$ body problem through the process of ``dequantization'', \ie reversion to the classical limit.  

The primary goal of this paper is to prove several fundamental conjectures that were made in \rf{MTE} concerning the algebraic structure of the underlying $n$ body radial metric tensor.  
To be precise, suppose the point masses $\subs mn$ occupy positions\fnote{We work with column vectors in $\Rd$ throughout.} 
$$\req{\p_i = \tvector{p_i^1,\ldotsx,p_i^d}\in \Rd,\\\rg in.}$$
We will refer to this as a \is{point configuration} consisting of $n$ points (representing the positions of the $n$ masses) in $d$-dimensional space.  To streamline the presentation, we will allow one or more of the points to coincide, \ie permit ``collisions'' of the masses.  The subsequent formulas will slightly simplify if we express them in terms of the inverse masses
\Eq{ai}
$$\req{\alpha _i = \fra{m_i},\\\rg in.}$$

\Df{sing} The point configuration $\subs \p n$ will be called \is{singular} if and only if its elements lie on a common affine subspace of dimension $\leq n-2$.  

Thus, three points are singular if they are collinear; four points are singular if they are coplanar or collinear; etc.  Note that non-singularity requires that the underlying space be of sufficiently high dimension, namely $d \geq n-1$.

Using the usual dot product and Euclidean norm\fnote{It would be an interesting project to extend our results to other norms and, more generally, to Riemannian manifolds.}, let
\Eq{rij}
$$\req{r_{ij} = r_{ji} = \nnorm{\p_i-\p_j} = \sqrt{\ip{\p_i-\p_j}{\p_i-\p_j}}\,,\\ i \ne j,}$$
denote the ${n \choose 2} = \f2\: n(n-1)$ interpoint distances $r = (\ldotsx r_{ij} \ldotsx)$.  On occasion, formulas may also involve $r_{ii} = 0$. The subset  traced out by the $r_{ij}$'s corresponding to all $n$ point configurations (for any, or, equivalently, sufficiently large $d$) will be called the \is{Euclidean distance cone}, which forms a closed conical subset of the nonnegative orthant in $\R^{n \choose 2}$, and denoted\fnote{In general, $\Rk_{\geq\, 0} = \setm{\x = \tvectors xk}{x_i \geq 0, \ \rg ik}$ will denote the nonnegative orthant and $\Rk_{> \,0} = \setm{\x = \tvectors xk}{x_i >0, \ \rg ik}$ the positive orthant of $\Rk$.}
\Eq{RSn}
$$\Cn \subset \R^{n \choose 2}_{\geq\, 0}.$$
We will explicitly characterize $\Cn$ in \th{SYH} below.  As we will see, its boundary $\partial \Cn$ consists of the interpoint distances determined by singular point configurations while its interior, denoted $\Cno\sstrut8$, will correspond to the nonsingular configurations.

%The \is{$n$ body matrix} $B = B\ps n$  is defined as the $\mx{\f2\:n(n-1)}\sstrut6$ matrix whose rows and columns are indexed by unordered pairs $\set ij=\set ji$ of distinct integers $1 \leq i < j \leq n$.  Its diagonal entries are
%$$b_{\set ij,\set ij} = 2\, \frac{m_i+m_j}{m_im_j}\: r_{ij}^2 = \Paz{\frac2{m_i}+\frac2{m_j}} \ip{\p_i-\p_j}{\p_i-\p_j},$$
%while its off diagonal entries are
%$$\eeq{b_{\set ij,\set ik} = \fra{m_i}\pa{r_{ij}^2 + r_{ik}^2 - r_{jk}^2} = \frac2{m_i}\,\ip{\p_i-\p_j}{\p_i-\p_k}, & i,j,k \ \roh{distinct},\\
%b_{\set ij,\set kl} = 0,& i,j,k,l\ \roh{distinct}.}$$
%Then 
%\Eq{Delta}
%$$\Delta\ps n = \det B\ps n$$ 
%will denote the \is{$n$ body determinant}.

The \is{$n$ body matrix} $B\ps n = B\ps n(\alpha,r)$ defined in \rf{MTE} is the $\mx{{n \choose 2}}\sstrut6$ matrix whose rows and columns are indexed by unordered pairs $\set ij=\set ji\sstrut6$ of distinct integers $1 \leq i < j \leq n$.  Its diagonal entries are
\Eq{Bd}
$$b_{\set ij,\set ij} = 2\pa{\alpha _i + \alpha _j}\: r_{ij}^2 = 2\pa{\alpha _i + \alpha _j}\, \ip{\p_i-\p_j}{\p_i-\p_j},$$
while its off diagonal entries are
\Eq{Bo}
$$\eeq{b_{\set ij,\set ik} = \alpha _i\pa{r_{ij}^2 + r_{ik}^2 - r_{jk}^2} = 2\:\alpha _i\,\ip{\p_i-\p_j}{\p_i-\p_k}, & i,j,k \ \roh{distinct},\\
b_{\set ij,\set kl} = 0,& i,j,k,l\ \roh{distinct}.}$$
For example, the $3$ body matrix is
\Eq{B3} 
$$\eeq{B\ps3 = \omatrix{2\: (\alpha _1 + \alpha _2) r_{12}^2&\alpha _1(r_{12}^2 + r_{13}^2 - r_{23}^2)&\alpha _2(r_{12}^2 + r_{23}^2 - r_{13}^2)\sstrut6\\
\alpha _1(r_{12}^2 + r_{13}^2 - r_{23}^2)&2\: (\alpha _1 + \alpha _3) r_{13}^2&\alpha _3(r_{13}^2 + r_{23}^2 - r_{12}^2)\sstrut6\\
\alpha _2(r_{12}^2 + r_{23}^2 - r_{13}^2)&\alpha _3(r_{13}^2 + r_{23}^2 - r_{12}^2)&2\: ( \alpha _2 + \alpha _3) r_{23}^2}},$$
where the rows and the columns are ordered as follows: $\set 12, \set 13, \set 23$.
Our first main result concerns its positive definiteness.

\Th{nbpos} The $n$ body matrix $B\ps n(\alpha,r)$ is positive semi-definite for $\alpha\in \Rn_{\geq \,0}$ and $r \in \Cn$, and is positive definite if and only if $\alpha\in \Rn_{>\, 0}$ and $r \in \Cno\sstrut8$, i.e., the masses are all positive and situated in a non-singular point configuration.
%the point configuration $\subs \p n$ is non-singular.

Consequently, for each fixed value of the mass parameters $\alpha \in \Rn_{> 0}$, the $n$ body matrix $B\ps n(\alpha ,r)$  defines a Riemannian metric on the interior of the Euclidean distance cone.  This result implies that the radial component of the quantum $n$ body Schr\"odinger operator derived in \rf{MTE} is the corresponding elliptic Laplace--Beltrami operator on the Riemannian manifold $\Cno$.  The underlying Riemannian geometry of $\Cno\sstrut8$ is not well understood.

%\Remark Since the masses lie at distinct locations, the interpoint distances \eq{rij} are all positive, $r_{ij} > 0$, and are further constrained by the triangle inequalities.  Thus the space they coordinatize is strictly contained in the positive orthant of $\R^{n(n-1)/2}$. 

The determinant of the $n$ body matrix 
\Eq{Delta}
$$\Delta\ps n= \Delta\ps n(\alpha,r) = \det B\ps n(\alpha,r)$$ 
will be called the \is{$n$ body determinant}.
For example, a short computation based on \eq{B3} shows that the {$3$ body determinant} can be written in the following factored form:
\Eq{B3d}
$$\ibeq{130}{\Delta\ps3 = \det B\ps3 = -2\:(\alpha _1 \alpha _2 + \alpha _1 \alpha _3 + \alpha _2 \alpha _3)\,(\alpha _3\:r_{12}^2 + \alpha _2\:r_{13}^2 +\alpha _1 \:r_{23}^2)\\(r_{12}^4 + r_{13}^4 + r_{23}^4 - 2\:r_{12}^2r_{13}^2 - 2\:r_{12}^2r_{23}^2 - 2\:r_{13}^2r_{23}^2).}$$
Two important things to notice:  ignoring the initial numerical factor, the first factor is the elementary symmetric polynomial of degree $n-1 = 2$ in the mass parameters $\alpha _i = 1/m_i$; further, the final polynomial factor is purely \is{geometric}, meaning that it is independent of the mass parameters, and so only depends on the configuration of their locations through their interpoint distances.  Positive definiteness of $B\ps3$ implies $\Delta\ps3 > 0$ for nonsingular (\ie non-collinear) configurations.  In view of the sign of the initial numerical factor, this clearly implies the final geometrical factor is strictly negative on such configurations, a fact that is not immediately evident and in fact requires that the $r_{ij}$'s be interpoint distances satisfying the triangle inequalities, i.e., $r = (r_{12},r_{13},r_{23}) \in \Cx_0\ps 3$; indeed, this factor is obviously positive for some non-geometrical values of the $r_{ij}$'s. 
Similar factorizations were found in \rf{MTE} for the cases $n=2,3,4$, and, in the case of equal masses, $n=5,6$, via symbolic calculations using both \Mathematica\ and \Maple.

The geometrical factors that appear in each of these computed factorizations are, in fact, well known, and equal to the \is{Cayley--Menger determinant} of order $n$, a quantity that arises in the very first paper of Arthur Cayley, \rf{Cayleyp1}, written before he turned 20 and, apparently,  inspired by reading Lagrange and Laplace!  In this paper, Cayley uses the relatively new theorem that the determinant (a quantity he calls ``tolerably known'') of the product of two matrices is the product of their determinants in order to solve the problem of finding the algebraic condition (or syzygy) relating the interpoint distances among singular configurations of $5$ points in a three-dimensional space, as well as $4$ coplanar points and $3$ collinear points, each of which is characterized by the vanishing of their respective Cayley--Menger determinant.  A century later, in the hands of Karl Menger, this determinantal quantity laid the foundation of the active contemporary field of distance geometry, \rf{Blumenthal,LiLa}; see also \rf{Oji} for further results and extensions to other geometries.

%There are two ways of constructing the Cayley--Menger determinants. 
The \is{order $n$ Cayley--Menger matrix} is the symmetric  matrix
\Eq{lCM}
$$
%C_R = 
C\ps n= C\ps n(r) = \omatrix{0&r_{12}^2&r_{13}^2&\ldots&r_{1n}^2&1\sstrut6\\
r_{12}^2&0&r_{23}^2&\ldots&r_{2n}^2&1\sstrut6\\
r_{13}^2&r_{23}^2&0&\ldots&r_{3n}^2&1\\
\vdots&\vdots&\vdots&\ddots&\vdots\\
%r_{1,n-1}^2&r_{2,n-1}^2&r_{3,n-1}^2&\ldots&r_{n-1,n}&1\\
r_{1n}^2&r_{2n}^2&r_{3n}^2&\ldots&0&1\sstrut8\\
1&1&1&\ldots&1&0}$$
of size  $\mxp{n+1}$ involving the same interpoint distances \eq{rij}.  The Cayley--Menger matrix is a bordered version of the \is{Euclidean} (\is{squared\/}) \is{distance matrix}\fnote{Even though its entries are the \is{squared} Euclidean distances, the potentially confusing terminology ``Euclidean distance matrix'' is almost universal throughout the literature, and plays the preponderant role in all applications.}
\Eq{EDM}
$$
D\ps n =D\ps n(r) = \omatrix{0&r_{12}^2&r_{13}^2&\ldots&r_{1n}^2\sstrut6\\
r_{12}^2&0&r_{23}^2&\ldots&r_{2n}^2\sstrut6\\
r_{13}^2&r_{23}^2&0&\ldots&r_{3n}^2\\
\vdots&\vdots&\vdots&\ddots&\vdots\\
%r_{1,n-1}^2&r_{2,n-1}^2&r_{3,n-1}^2&\ldots&r_{n-1,n}\\
r_{1n}^2&r_{2n}^2&r_{3n}^2&\ldots&0\sstrut8}$$
of importance in a wide range of applications, including statistics, crystallography, protein structure, machine learning, sensor networks, acoustics, psychometrics, and elsewhere, \rf{DPRV,LiLa}.

The \is{order $n$ Cayley--Menger determinant} is defined as the determinant of the Cayley--Menger matrix:
\Eq{CMd} 
$$\delta\ps n(r)  = \det C\ps n(r).$$
For example, when $n=3$,
\Eq{lCM3}
$$\ceq{C\ps 3 = \omatrix{0&r_{12}^2&r_{13}^2&1\sstrut6\\
r_{12}^2&0&r_{23}^2&1\sstrut6\\
r_{13}^2&r_{23}^2&0&1\sstrut8\\
1&1&1&0},\\
\delta\ps3  = \det C\ps3 = r_{12}^4 + r_{13}^4 + r_{23}^4 - 2\:r_{12}^2r_{13}^2 - 2\:r_{12}^2r_{23}^2 - 2\:r_{13}^2r_{23}^2,}$$
which coincides with the geometric polynomial factor in \eq{B3d}.
Keep in mind that both the $n$ body and Cayley--Menger determinants are homogeneous polynomials in the squared distances $r_{ij}^2$. 
 The general form of Cayley's result can be stated as follows.

\Th{CMdet} A collection of (squared) interpoint distances $r \in \Cn$ comes from a singular point configuration, so  $r \in \partial \Cn$,  if and only if the corresponding Cayley--Menger determinant vanishes\/\ro: $\delta\ps n(r)  =0$.

In other words, the boundary of the Euclidean distance cone is contained in the subvariety determined by the vanishing of a single polynomial --- the Cayley--Menger determinant.  Thus, \th{nbpos} implies that the $n$ body determinant, and hence the underlying metric, degenerates if and only if the Cayley--Menger determinant vanishes, and hence the masses are positioned on a lower dimensional affine subspace.  See below for a modern version of Cayley's original proof.

\Remark When $n=3$, the Cayley--Menger determinant \eq{lCM3} factorizes: 
%when written in terms of the $r_{ij}$,
\Eq{Heron}
$$\delta\ps3(r) = - (r_{12} + r_{13} + r_{23})(-\:r_{12} + r_{13} + r_{23})(r_{12} - r_{13} + r_{23})(r_{12} + r_{13} - r_{23}),$$
which, except for the sign and a factor of $\frac1{16}$, is \is{Heron's formula} for the squared area of a triangle. On the other hand, when $n \geq 4$, the Cayley--Menger determinant is an irreducible polynomial in the distance variables $r_{ij}$; see \rf{DASom,HHNS}, keeping in mind that their $n$ is our $n-1$.

We further note that the relation to Heron's formula is no accident.
Indeed, if $r_{ij}$ are the interpoint distances between $n$ points $\subs \p n\in \Rd$,
then, by a theorem of Menger, \rf{MengerEG} --- see also (**) in Section 40 of \rf{Blumenthal} --- the Cayley--Menger determinant $\delta\ps n(r)$ equals $\left(-1\right)^n 2^{n-1} \left(n-1\right)!^2$ times the squared volume of the $n$-simplex formed by these points.

Based on their above-mentioned symbolic calculations, Miller, Turbiner, and Escobar--Ruiz, \rf{MTE}, conjectured the following result.

%observed that, for small values of $n$, the $n$ body determinant can be factored, with one the factors being the order $n$ Cayley--Menger determinant.  They thus conjectured the following result.

\Th{factor} The $n$ body determinant factors,
\Eq{factor}
$$\Delta\ps n(\alpha,r) = \esp{n-1}(\alpha)\, \delta \ps n(r)\,\sigma \ps n(\alpha,r),$$
into the product of the elementary symmetric polynomial $\esp{n-1}(\alpha)$ of order $n-1$ in the mass parameters $\alpha = \psubs \alpha n$ times the Cayley--Menger determinant $\delta \ps n(r)$ of order $n$ depending only on the interpoint distances $r = (\ldotsx r_{ij}\ldotsx )$ times a polynomial $\sigma \ps n$ that depends upon both the $\alpha _i$ and the  $r_{ij}$.

%In this note, we establish the validity of \th{factor} for all $n$. It is interesting to note that this factorization holds for any values of the mass parameters $\alpha _i = 1/m_i$, even though the Cayley--Menger factor is purely geometrical, \ie only depends upon the distances between the mass locations.
 
Unfortunately, our proof of \th{factor} is purely existential; it does not yield an independent formula for the non-geometrical factor, other than the obvious $\sigma \ps n(\alpha,r) = \Delta\ps n(\alpha,r)/\pa{\esp{n-1}(\alpha)\, \delta \ps n(r)}$.  Thus, the problem of characterizing and understanding the non-geometric factor $\sigma\ps n $ remains open, although interesting formulas involving geometric quantities --- a polynomial involving sums of the squared volumes of the subsimplices determined by the point configuration weighted by the mass parameters --- are known when $n$ is small, \rf{MTE}, who conjecture that this will hold in general.  
%Nor does the proof give any insight into the geometry of the Riemannian manifold $\Cno$ whose metric tensor is prescribed by the $n$ body matrix $B\ps n(\alpha ,r)$ for fixed $\alpha \in \Rn_{>\,0}$.  

We shall, in fact, prove \th{factor} as a special case of a much more general determinantal factorization, \th{Wfactor}, which replaces the squared distances $r_{ij}^2$ by $n^2$ arbitrary elements $s_{i,j}$, not necessarily satisfying either $s_{i,j} = s_{j,i}$ or $s_{i,i} = 0$. We shall also generalize the dependence on the inverse mass parameters $ \alpha _i$ using the following elementary observation.  

\Lm{CA} Given the parameters $\subs \alpha n$,
consider the following $(n+1) \times (n+1)$ matrix
\Eq{CA}
$$
C_{A}=\omatrix{\alpha_{1} & 0 & \cdots & 0 & 1\\
0 & \alpha_{2} & \cdots & 0 & 1\\
\vdots & \vdots & \ddots & \vdots & \vdots\\
0 & 0 & \cdots & \alpha_{n} & 1\\
1 & 1 & \cdots & 1 & 0}.
$$
Then,
\Eq{dCA}
$$\det  C_{A}   =-\:e_{n-1}\left(  \alpha\right)  .$$

\smallskip

To establish this formula, one can simply expand the determinant along its last row.  Thus, the two initial factors in the $n$ body determinant factorization formula \eq{factor} are both realized by determinants of $(n+1) \times (n+1)$-matrices whose final row and column are of a very particular form.
In our further generalization of the $n$ body determinant factorization formula \eq{factor}, we will replace the upper left $n \times n$ block in \eq{lCM}  by a general matrix depending on $n^2$ arbitrary elements $s_{i,j}$ and the upper left $n \times n$ block in \eq{CA} by a general matrix depending on an additional $n^2$ arbitrary elements $t_{k,l}$.  See below for details.

%\begin{proof}
%Expanding $\det\left(  C_{A}\right)  $ along the last row yields
%\begin{align*}
%\det\left(  C_{A}\right)    & =\sum_{i\in\left[  n\right]  }\left(  -1\right)
%^{i+\left(  n+1\right)  }1\underbrace{\det\left(
%\begin{array}
%[c]{ccccccccc}%
%\alpha_{1} & 0 & 0 & 0 & 0 & 0 & 0 & 0 & 1\\
%0 & \alpha_{2} & 0 & 0 & 0 & 0 & 0 & 0 & 1\\
%\vdots & \vdots & \ddots & \vdots & \vdots & \vdots & \vdots & \vdots &
%\vdots\\
%0 & 0 & 0 & \alpha_{i-1} & 0 & 0 & 0 & 0 & 1\\
%0 & 0 & 0 & 0 & 0 & 0 & 0 & 0 & 1\\
%0 & 0 & 0 & 0 & \alpha_{i+1} & 0 & 0 & 0 & 1\\
%0 & 0 & 0 & 0 & 0 & \alpha_{i+2} & 0 & 0 & 1\\
%0 & 0 & 0 & 0 & 0 & 0 & \ddots & 0 & \vdots\\
%0 & 0 & 0 & 0 & 0 & 0 & 0 & \alpha_{n} & 0
%\end{array}
%\right)  }_{\substack{=\left(  -1\right)  ^{n-i}\alpha_{1}\alpha_{2}%
%\cdots\alpha_{i-1}\alpha_{i+1}\alpha_{i+2}\cdots\alpha_{n}1\\\text{(since
%moving the }i\text{-th row of this matrix}\\\text{to the very bottom
%transforms it into an}\\\text{upper-triangular marix with diagonal}%
%\\\text{entries }\alpha_{1},\alpha_{2},\ldots,\alpha_{i-1},\alpha_{i+1}%
%,\alpha_{i+2},\ldots,\alpha_{n},1\text{)}}}\\
%& =\sum_{i\in\left[  n\right]  }\underbrace{\left(  -1\right)  ^{i+\left(
%n+1\right)  }1\left(  -1\right)  ^{n-i}}_{=-1}\alpha_{1}\alpha_{2}\cdots
%\alpha_{i-1}\alpha_{i+1}\alpha_{i+2}\cdots\alpha_{n}1\\
%& =-\underbrace{\sum_{i\in\left[  n\right]  }\alpha_{1}\alpha_{2}\cdots
%\alpha_{i-1}\alpha_{i+1}\alpha_{i+2}\cdots\alpha_{n}}_{=e_{n-1}\left(
%\alpha\right)  }=-e_{n-1}\left(  \alpha\right)  ,
%\end{align*}
%which proves \eq{dCA}.
%\end{proof}

Combining \ths{nbpos}{factor} allows us to resolve another conjecture in \rf{MTE}, that for nonsingular point configurations, the mass-dependent factor $\sigma \ps n(\alpha, r)$ is of one sign. 
 
\Th{possig} All factors in the $n$ body determinant factorization \eq{factor} are of one sign, namely
\Eq{posdsd}
$$\qeq{\Delta\ps n(\alpha, r) > 0,\\ \esp{n-1}(\alpha) > 0,\\(-1)^n \,\delta \ps n( r)> 0,\\(-1)^n \:\sigma \ps n(\alpha, r) > 0,}$$
 provided the mass parameters $\alpha _i = 1/m_i$ are positive, so $\alpha \in \Rn_{> \, 0}$, and their positions $\p_i$ do not all lie in an affine subspace of dimension $\leq n-2$, so $r \in \Cno$. 

\Proof
Since the determinant of a positive definite matrix is positive,  \rf{Lauritzen,OS},
%(a consequence, \textit{e.g.}, of Theorem 9.11 in \rf{Lauritzen}), 
\th{nbpos} immediately implies the first inequality in \eq{posdsd}.  The positivity of the elementary symmetric polynomial is trivial.  The sign of the Cayley--Menger determinant $\delta \ps n(r)$ on nonsingular point configurations is well known; see \eq{deltageq0} below for a proof.  The final inequality follows immediately from the factorization \eq{factor}.\qed

\Section p Positive Definiteness.

In this section, we present a proof of \th{nbpos}, as well as the  known 
%\th{CMdet} and the positive definiteness of the reduced Euclidean distance matrices, to be introduced next.  
results concerning the vanishing and the sign of the Cayley--Menger determinant.  These results will, modulo the proof of the Factorization \th{factor}, establish the Sign \th{possig}.  We begin by introducing an important collection of matrices that are closely related to the Cayley--Menger matrices.
 
Given a point configuration $\subs \p n \in \Rd$, for each $\rg kn$, consider the $\mxp{n-1}$ matrix $M\ps n_k= M\ps n_k(r)$ with entries 
\Eq{Mij}
$$\eeq{m_{ij} = 2\,\ipdpp{\p_i-\p_k}{\p_j-\p_k} = \nnorm{\p_i-\p_k}^2 + \nnorm{\p_j-\p_k}^2 - \nnorm{\p_i-\p_j}^2 \\= r_{ik}^2 + r_{jk}^2 - r_{ij}^2, \qquad \qquad i,j \ne k,}$$
where the indices $i,j$ run from $1$ to $n$ omitting $k$, and where $r_{ii} = 0$.  
Note that its diagonal entries are $m_{ii} = 2\: r_{ik}^2$. Thus, in particular, $\dsty M\ps n_n(r)$ is explicitly given by
\Eq{CM}
$$\hskip-.3in\omatrix{2\:r_{1n}^2&r_{1n}^2 + r_{2n}^2 - r_{12}^2&r_{1n}^2 + r_{3n}^2 - r_{13}^2&\hskip-.1in\ldots&r_{1n}^2 + r_{n-1,n}^2 - r_{1,n-1}^2\sstrut6\\
r_{1n}^2 + r_{2n}^2 - r_{12}^2&2\:r_{2n}^2&r_{2n}^2 + r_{3n}^2 - r_{23}^2&\hskip-.1in\ldots&r_{2n}^2 + r_{n-1,n}^2 - r_{2,n-1}^2\sstrut6\\
r_{1n}^2 + r_{3n}^2 - r_{13}^2&r_{2n}^2 + r_{3n}^2 - r_{23}^2&2\:r_{3n}^2&\hskip-.1in\ldots&r_{3n}^2 + r_{n-1,n}^2 - r_{3,n-1}^2\\
\vdots&\vdots&\vdots&\hskip-.1in\ddots&\vdots\\
r_{1n}^2 + r_{n-1,n}^2 - r_{1,n-1}^2&r_{2n}^2 + r_{n-1,n}^2 - r_{2,n-1}^2&r_{3n}^2 + r_{n-1,n}^2 - r_{3,n-1}^2&\hskip-.1in\ldots&2\:r_{n-1,n}^2},\sstrut{40}$$
with evident modifications for the general case $M\ps n_k(r)$.
For example, when $n=3$, 
\Eq{CM3}
$$\eeq{M\ps3_1 = \omatrix{2\:r_{12}^2&r_{12}^2 + r_{13}^2 - r_{23}^2\sstrut6\\
r_{12}^2 + r_{13}^2 - r_{23}^2&2\:r_{13}^2},\\
M\ps3_2 = \omatrix{2\:r_{12}^2&r_{12}^2 + r_{23}^2 - r_{13}^2\sstrut6\\
r_{12}^2 + r_{23}^2 - r_{13}^2&2\:r_{23}^2},\\
M\ps3_3 = \omatrix{2\:r_{13}^2&r_{13}^2 + r_{23}^2 - r_{12}^2\sstrut6\\
r_{13}^2 + r_{23}^2 - r_{12}^2&2\:r_{23}^2}
% \\= 2\omatrix{\ip{\p_1-\p_3}{\p_1-\p_3}&\ip{\p_1-\p_3}{\p_2-\p_3}\\
%\ip{\p_1-\p_3}{\p_2-\p_3}&\ip{\p_2-\p_3}{\p_2-\p_3}}
.}$$
The associated quadratic forms 
\Eq{qks}
$$\qeq{q_k(\x) = \x_k^T M\ps n_k \x_k = \Sumu{i,j\ne k} \bpa{r_{ik}^2 + r_{jk}^2 - r_{ij}^2}x_ix_j,}$$
where $\x_k$ is obtained from $\x = \tvectors xn$ by omitting the entry $x_k$, first appear in a paper by Fr\'echet, \rf{Frechet}, in the special case $n=4$, and, in general, in a supplement written by Schoenberg, \rf{Schoenberg}, who uses them to prove the fundamental \th{SYH} below.  Further, as shown by Schoenberg, the quadratic form \eq{qks} equals the negative of the quadratic form 
\Eq{qedm}
$$\qeq{q(\x) = \x^T D\ps n \x = \Sum{i,j}n r_{ij}^2x_ix_j}$$
associated with the  Euclidean distance matrix \eq{EDM} when restricted to the hyperplane 
$$\H = \setm{\x\in\Rn}{x_1 + \cdotsx + x_n = 0},$$
so that  
\Eq{qqk}
$$q(\x) = -\:q_k(\x) \rox{whenever}  x_1 + \cdotsx + x_n = 0, \qquad \rg kn.$$

The matrices $M\ps n_k$ explicitly appear in a slightly later but independently written paper by Gale Young and Alston Householder, \rf{YoHo}, who were motivated by questions arising in psychometrics; their paper contains a much simplified proof of Schoenberg's \th{SYH} characterizing the Euclidean distance cone, as well as establishing their explicit connection with the Cayley--Menger matrices; see \th{CMdk-prop} below. Although they appear often in the distance geometry literature, we have been unable to find an actual name for them.  Schoenberg's quadratic form identity \eq{qqk} suggests calling them \is{reduced Euclidean distance matrices}; alternatively, one could name them after Fr\'echet, Schoenberg,  Young, and Householder, giving the appealing (slightly permuted) acronym FYSH matrix.  However, in view of later developments, we choose to follow the former more descriptive nomenclature here.

The following result shows that the reduced Euclidean distance determinants of order $n$ (i.e.~the determinants of the reduced Euclidean distance matrices) are independent of $k$ and, up to sign, are equal to the order $n$ Cayley--Menger determinant.

\Th{CMdk-prop}
The {Cayley--Menger determinant} coincides, up to sign, with every reduced Euclidean distance determinant of the same order:
\Eq{CMdk} 
$$\delta\ps n(r)  = (-1)^n\,\det M\ps n_k(r),$$
for any value of $\rg kn$.

\Proof
Our proof follows \rf{YoHo}.
Let us concentrate on the case $k=n$, noting that all formulas are invariant under permutations of the points and, hence, it suffices to establish this particular case.  We perform the following elementary row and column operations on the  Cayley--Menger matrix $C\ps n$, \cf \eq{lCM}, that do not affect its determinant.  We subtract its \nth row from the first through \xst{(n-1)}rows, and then subtract its \nth column, which has not changed, from the resulting first through \xst{(n-1)}columns.  The result is the $\left(n+1\right)\times\left(n+1\right)$-matrix\fnote{The bold face $\o$'s indicate row or column vectors of $0$'s, while the ordinary $0$'s are scalars.}
$$\Ctilde\ps n = \omatrix{\dsty -\:M\ps n_n&*&\o\\*&0&1\\\o&1&0} ,$$
where the upper left $\mxp{n-1}$ block is $\dsty -\:M\ps n_n$, the \nth row and column of $\Ctilde\ps n$ are the same as the \nth row and column of $C\ps n$ (the stars indicate the entries, whose precise values are not needed), and the last row and column have all zeros except for their \nth entry.  We can further subtract suitable multiples of the last row and column from the first $n-1$ rows and columns in order to annihilate their \nth entries, leading to
$$\Chat\ps n = \omatrix{\dsty -\:M\ps n_n&\o&\o\\\o&0&1\\\o&1&0}.$$
It is then easy to see that
$$\delta\ps n = \det C\ps n = \det \Ctilde\ps n = \det \Chat\ps n = (-1)^n \det M\ps n_n.$$
\vglue-20pt
\qed

Now, dropping the ${}\ps n$ superscript and ${}_n$ subscript from here on to avoid cluttering the formulas, the first formula in \eq{Mij} implies that, up to a factor of $2$, the reduced Euclidean distance matrix $M = M\ps n_n$ is a \is{Gram matrix}, \rf{LiLa,OS}, namely
\Eq{MA}
$$M = 2\: A^T \! A,\where
A = \bpa{\p_1-\p_n,\ldotsx,\p_{n-1}-\p_n}$$
is the $d \times (n-1)$ matrix with the indicated columns. In accordance with our choice of nomenclature, we will refer to $A$ as the \is{reduced point configuration matrix} since it agrees with the $d \times n$ \is{point configuration matrix} $P = \psubs \p n$, whose columns are the points or mass locations, when restricted to the hyperplane $\H$:
\Eq{AP}
$$\req{A\:\xh = P\:\x \roxx{when} x_1 + \cdotsx + x_n = 0,\roxx{where} \eeq{\xh = \tvectors x{n-1},\\\x = \tvectors xn .}}$$
%, which we will call the \nth \is{point configuration matrix}.
%, so that its entries are given by dot products:
%$$m_{ij} = 2\,\ipdpp{\p_i-\p_n}{\p_j-\p_n}.$$ 
%The \is{Cayley--Menger determinant} is
%\Eq{CMd} 
%$$\delta  = \det M \roq{or maybe} \det (M/2) = \det(A^T  A).$$
%(The choice of normalization is irrelevant.)
We know that $\delta\ps n  = (-1)^n \det M = 0$ if and only if $\ker M \ne \bc{\o}$, meaning there exists $\o \ne \xh \in \mathbb{R}^{n-1}\ustrut{13}$ such that 
\Eq{Mx0}
$$M\:\xh = \o.$$
Multiplying the left hand side by $\xh^T$ and using \eq{MA}, we find
\Eq{CMpos}
$$\xh^TM\:\xh = 2\msk4 \xh^T \! A^T \!A\,\xh = 2\,\nnorm{A\,\xh}^2 \geq 0 \roq{for all} \xh \in \R^{n-1}.$$
This identity implies that the reduced Euclidean distance matrix $M$ is positive semi-definite, and is positive definite if and only if the reduced point configuration matrix has trivial kernel: $\ker A = \szerO$. Consequently, \eq{Mx0} holds if and only if
\Eq{Ax0}
$$A\,\xh = \o.$$
Since we assumed $\xh\ne \o$, this is equivalent to the linear dependence of the columns of $A$, meaning the vectors $\p_1 - \p_n,\ldots,\p_{n-1} - \p_n$ span a subspace of dimension $\leq n-2$, which requires that $\subs \p n$ lie in an affine subspace of dimension $\leq n-2$, \ie they form a singular point configuration.  We conclude that this occurs if and only if the Cayley--Menger determinant vanishes, which thus establishes Cayley's \th{CMdet}.  
Moreover, positive (semi-)definiteness of $M$ implies non-negativity of its determinant, and hence, by \eq{CMdk},
\Eq{deltageq0}
$$(-1)^n\,\delta\ps n(r) \geq 0, \roq{with equality if and only if} \ker A \ne \szerO,$$
thus establishing the second to last inequality in \eq{posdsd}.  Replacing $\p_n$ by $\p_k$ does not change the argument, and hence we have established the following known result.

\Th{CMpos} Given a point configuration $\subs\p n \subset \Rd\sstrut7$, and any $1 \leq k \leq n$, the corresponding reduced Euclidean distance matrix $M\ps n_k(r)\sstrut8$ depending on the interpoint distances $r_{ij} = \nnorm{\p_i-\p_j}\sstrut9$ is positive semi-definite, and is positive definite if and only if the point configuration is nonsingular.
%, meaning that they do not lie in an affine subspace of dimension $\leq n-2$. 

The condition that the reduced Euclidean distance matrix $M\ps n_k$ be positive semi-definite is, in fact, both necessary and sufficient in order that a given collection of nonnegative numbers $r_{ij} \geq 0$ be the interpoint distances of a bona fide point configuration, i.e. satisfy \eq{rij} for some $\subs \p n \in \Rd$.  This theorem is due to Schoenberg, \rf{Schoenberg}, and, independently, to Young and Householder, \rf{YoHo}, whose simple proof is, for completeness, included here.  This key result, in addition to its foundational role in distance geometry, \rf{LiLa}, directly inspired the powerful and widely used method of contemporary statistical analysis known as \is{multidimensional scaling}, \rf{Davidson}.

\Th{SYH} Given $r = (\ldotsx r_{ij} \ldotsx)\in \R^{n \choose 2}_{\geq\, 0}\sstrut7$, there exists $\subs\p n \subset \Rd$ for some $d >0$ such that $r_{ij} = \nnorm{\p_i-\p_j}$ if and only if for any \ro(and hence all\/\ro) $1 \leq k \leq n$, the corresponding reduced Euclidean distance matrix $M\ps n_k(r)$ is positive semi-definite. The point configuration $\subs\p n$ is unique up to a Euclidean transformation of $\Rd$. The minimal $d$ for which this occurs is given by the rank of $M\ps n_k(r)$.

\Proof
Let us without loss of generality set $k=n$.
In view of \th{CMpos}, we need only prove that, given $r \in \R^{n \choose 2}_{\geq\, 0}$, if the corresponding reduced Euclidean distance matrix $M\ps n_n(r)$ is positive semi-definite, then we can find a point configuration for which the $r_{ij}$ are the interpoint distances.  
Consider its spectral factorization
\Eq{sf}
$$M\ps n_n = Q\: D \:Q^T,$$
in which $Q$ is an $\mxp{n-1}$ orthogonal matrix and $D$ is a diagonal matrix of the same size with the eigenvalues $\lambda _ i$, \ $\rg i{n-1}$, on its diagonal and which, by positive semi-definiteness, are all $\geq 0$; see \rf{OS}  for linear-algebraic details.  Then set
\Eq{AQD}
$$A =  \fra{\sqrt 2}  \,\sqrt D\,Q^T,$$
where $\sqrt D$ is the diagonal matrix whose entries are $\sigma _i = \sqrt{\lambda _ i}\,$. Equation \eq{sf} immediately implies that $A$ satisfies the Gram equation \eq{MA}, and hence defines a corresponding reduced point configuration matrix, namely its \ith column gives $\p_i-\p_n$, where the final point $\p_n$ can be arbitrarily specified. Note also that the (nonzero) $\sigma _i$ are the singular values of the reduced point configuration matrix $A$.  The proof of the remaining statements is left to the reader.\qed 

 \th{SYH} identifies the set of positive semi-definite reduced Euclidean distance matrices with the Euclidean distance cone $\Cn \subset \R^{n \choose 2}$.  When $n=3$, in view of Heron's formula \eq{Heron}, positive semi-definiteness reduces to the triangle inequalities among the three interpoint distances.  However, when $n > 3$, positive semi-definiteness imposes additional constraints on the distances beyond those required by the triangle inequalities among each triple of points in the configuration.
 
 \medskip
 
\is{Remark\/}: As a corollary of Schoenberg's identity \eq{qqk}, one sees that a Euclidean distance matrix $D\ps n = D\ps n(r)$, as in \eq{EDM}, arises from a bona fide point configuration if and only if
\Eq{xDx}
$$
\x^TD\ps n\x \leq 0 \rox{for all} \x = \tvectors xn \in \Rn \rox{such that} x_1 + \cdotsx + x_n = 0.$$

\ignore{
Next, consider the $d \times n$ and  $\pa{d+1} \times n$ matrices with the indicated columns:
\Eq P
$$\req{P = \omatrix {\p_1&\p_2&\ldotsx&\p_n},\\\Pt = \omatrix {\p_1&\p_2&\ldotsx&\p_n\\1&1&\ldotsx&1},}$$
the columns of the latter obtained by appending a $1$ to the column vectors $\p_i\sstrut6$, which is reminiscent of the introduction of projective coordinates.
% --- although we will not pursue this here. 
Subtracting the \nth column of $\Pt\ustrut{11}$ from all the other columns produces the matrix
\Eq Q
$$\Quohat = \omatrix{A&\p_n\\\o&1}.$$
Moreover, if we set
\Eq{xstar}
$$\req{\x = (x_1,\ldotsx, x_n)^T \ne \o,\\\xh = \tvectors x{n-1}\ne \o,}$$
then it is easily seen that 
\Eq{Px0}
$$\Pt\:\x = \o \iff  A\:\xh = \o \andq x_1 + x_2 + \cdotsx + x_n  = 0.$$
Writing out the first equation yields
\Eq{px0}
$$\qeq{x_1\p_1 + x_2\p_2 + \cdotsx + x_n\p_n= \o \withq x_1 + x_2 + \cdotsx + x_n  = 0.}$$
Replacing $x_i$ by $- \Sumu{k\ne i} x_k$, this immediately implies that
\Eq{bpx0}
$$\Sum kn x_k(\p_k-\p_i) = \o \roq{for any} \rg in,$$
which is the analog of \eqe{Ax0} for the reduced Euclidean distance matrix $\dsty M\ps n_i$ based at the point $\p_i$.

To complete the proof of \th{factor}, given $\x=\tvectors xn$ as above, let\break $\z = \x\ps2 \in \Rx{n(n-1)/2}$ denote the vector whose entries, indexed by the same unordered pairs as the $n$ body matrix $B = B\ps n$, are the products of distinct entries of $\x$, so 
\Eq{yij}
$$\req{z_{\set ij} = x_i x_j,\\ i\ne j.}$$
Observe that $\z \ne \o$, since, by the conditions in \eqc{xstar}{Px0}, at least two of the entries of $\x\ne \o$ must be nonzero, and so $\z$ has at least one nonzero entry.
We claim that
\Eq{By0}
$$B\:\z = \o.$$
Indeed, referring back to \eqr{Bd}{Bo}, the entry of the vector $B\:\z$ indexed by  $(i\:j)$ is
\Eq{Byij}
$$\hskip3pt\ibeq{20}{2\:\pa{\alpha _i+\alpha _j}\ip{\p_i-\p_j}{\p_i-\p_j}\: x_ix_j \crhQ + 2\:\alpha _i\,\Sumu{k \ne j} \ip{\p_i-\p_j}{\p_i-\p_k}\: x_ix_k + 2\:\alpha _j\,\Sumu{k \ne i} \ip{\p_i-\p_j}{\p_i-\p_k}\: x_jx_k 
\\= 2\:\alpha _ix_i\Ip{\p_i-\p_j}{\Sum kn x_k(\p_i-\p_k)}  + 2\:\alpha _jx_j\Ip{\p_i-\p_j}{\Sum kn x_k(\p_j-\p_k)} = 0,\hskip-23pt}$$
in view of \eq{bpx0}.
Thus, if the Cayley--Menger determinant vanishes, $\delta = 0$, then there exists $\x\ne \o$ with $x_1 + \cdots + x_n = 0$ such that \eq{Px0} holds.  But this implies that $\z \ne \o$ defined by \eq{yij} satisfies \eq{By0}, which implies the $n+1$ body determinant vanishes, $\Delta\ps n  = \Delta  = 0$.
Restoring the index $n$, we conclude that 
\Eq{Dd0}
$$\Delta\ps n = \det B\ps n = 0 \roq{whenever}\delta\ps n = (-1)^n\,\det M\ps n_n = 0.$$

Finally, according to \rf{DASom} --- see also \rf{HHNS} --- when $n \geq 4$, the Cayley--Menger determinant $\delta\ps n$ is an irreducible polynomial in the variables $r_{ij}$.  Thus, \eq{Dd0} implies that $\delta\ps n$ must be a factor of the $n$ body determinant $\Delta \ps n$, proving the conjectured factorization \eq{factor} in these cases.

On the other hand, when $n=3$, the Cayley--Menger determinant \eq{lCM3} factorizes: 
%when written in terms of the $r_{ij}$,
%\Eq{Heron}
$$\delta\ps3 = (r_{12} + r_{13} + r_{23})(-\:r_{12} + r_{13} + r_{23})(r_{12} - r_{13} + r_{23})(r_{12} + r_{13} - r_{23}),$$
which is \is{Heron's formula} for the squared area of a triangle, \rf{Oji}.  Thus, when $n=2$ or $3$, one needs to perform an easy explicit calculation to establish the factorization \eq{factor}, which can be found in \rf{MTE}.
}

%\Section {pd} Positive Definiteness.

Let us next prove \th{nbpos} establishing the positive definiteness of the $n$ body matrix for nonsingular point configurations.
%Our second main result says that, away from singular configurations, the $n$ body matrix is positive definite and hence defines a Riemannian metric on the space coordinatized by the interpoint distances.
%
%\Th{nbpos} If the $n$ masses $\subs \p n$ do not lie in an affine subspace of dimension $\leq n-2$, then the corresponding $n$ body matrix is positive definite.
%
%\Proofth{nbpos}
Observe that the $n$ body matrix is linear in the inverse mass paramters $\alpha _i$.  Moreover, if we let the mass parameter $m_k = 1$ and send all other $m_j\to \infty$, or, equivalently, $\alpha _k = 1$ and $\alpha _j=0$ for $j \ne k$, then the $n$ body matrix $B = B\ps n$ reduces to the matrix $\dsty B\ps n_k = \Mhat\ps n_k$ obtained by placing the $(i,j)$-th entry of the \kth reduced Euclidean distance matrix $\dsty M\ps n_k\sstrut5$ based at the point $\p_k$ in the position labelled by the unordered index pairs $\set ik$ and $\set jk$, and setting all other entries, \ie those with one or both labels not containing $k$, to zero.  
%Let us call the resulting matrix the \kth \is{expanded Cayley--Menger matrix}. 
We have thus shown that the $n$ body matrix decomposes into a linear combination thereof:
%the expanded Cayley--Menger matrices
\Eq{Bna}
$$B\ps n(\alpha ,r) = \Sum kn \alpha _k \Mhat\ps n_k(r),$$
since each entry of $B\ps n$ is linear in the $\alpha_i$'s.
For example when $n=3$, we write \eq{B3} as
$$\ibeq{1}{B\ps3 =\alpha _1\Mhat\ps3_1+ \alpha _2\Mhat\ps3_2+ \alpha _3\Mhat\ps3_3=\alpha _1\omatrix{2\:r_{12}^2&r_{12}^2 + r_{13}^2 - r_{23}^2&0\sstrut6\\
r_{12}^2 + r_{13}^2 - r_{23}^2&2\: r_{13}^2&0\sstrut6\\
0&0&0} + {}\\
{}\alpha _2\omatrix{2\:r_{12}^2&0&r_{12}^2 + r_{23}^2 - r_{13}^2\sstrut6\\
0&0&0\sstrut6\\
r_{12}^2 + r_{23}^2 - r_{13}^2&0&2\:r_{23}^2}
 + \alpha _3\omatrix{0&0&0\sstrut6\\
0&2\:r_{13}^2&r_{13}^2 + r_{23}^2 - r_{12}^2\sstrut6\\
0&r_{13}^2 + r_{23}^2 - r_{12}^2&2\:r_{23}^2},}$$
and recognize the nonzero entries of its three matrix summands as forming order $3$ reduced Euclidean distance matrices \eq{CM3}.  

Now, to prove positive definiteness of $B = B\ps n$, we need to show positivity of  the associated quadratic form: 
\Eq{posBn}
$$\z^T B \:\z > 0 \roq{for all} \o \ne \z = \ptvector{\ldotsx z_{\set ij} \ldotsx} \in \Rx{n \choose 2}.$$
Using \eq{Bna}, we can similarly expand this quadratic form:
\Eq{posBnx}
$$\z^T B \:\z = \Sum kn \alpha _k \:\z^T\Mhat\ps n_k\:\z = \Sum kn \alpha _k\: \z_k^TM\ps n_k\:\z_k,$$
where
$$\z_k = \ptvector{z_{\set 1k},\ldots,z_{\set {k-1}k},z_{\set {k+1}k},\ldots,z_{\set nk}} \in \Rx{n-1}, \forq \rg kn,$$
so $z_{\set kk}\sstrut5$ is omitted from the vector,  keeping in mind that the indices are symmetric, so $z_{\set ij} = z_{\set ji}$.  The final identity in \eq{posBnx} comes from eliminating all the terms involving the zero entries in $\Mhat\ps n_k$.  Now, \th{CMpos} implies positive semi-definiteness of the reduced Euclidean distance matrices $M\ps n_k$ and, hence,
\Eq{pdMk}
$$\z_k^TM\ps n_k\:\z_k \geq 0,$$
which, by \eq{posBnx}, establishes positive semi-definiteness of the $n$ body matrix.  Moreover, if the point configuration $\subs \p n$ is nonsingular, \th{CMpos} implies positive definiteness of the reduced Euclidean distance matrices, and hence \eq{pdMk} becomes an equality if and only if $\z_k = \o$.  Moreover, if $\z \ne \o \in \Rx{n(n-1)/2}$, then at least one $\z_k \ne \o \in \Rx{n-1}$, and hence at least one of the summands on the right hand side of \eq{posBnx} is strictly positive, which establishes the desired inequality \eq{posBn}, thus proving positive definiteness of the $n$ body matrix. On the other hand, if the configuration is singular, the corresponding Cayley--Menger determinant vanishes, and so the Factorization \th{factor}, to be proved below, implies that the $n$ body determinant also vanishes, which means that the $n$ body matrix cannot be positive definite.\qed

% @Peter: My part begins here.

\Section g Factorization of Certain Determinants.

In order to prove the Factorization \th{factor}, we will, in fact, significantly generalize it.  A proof of  the generalization will establish the desired result as a special case.

%\subsection{\label{subsect.factor1}A tale of three matrices}

\emph{Notation\/}:  For each nonnegative $m\in\mathbb{Z}$, we let $\left[  m\right]  $ be the set $\left\{
1,2,\ldots,m\right\}  $.

Let us  define a class of matrices that includes the Cayley--Menger matrix $C\ps n$ in \eq{lCM} and the matrix $C_A$ in \eq{CA}. 

Let $\Ring$ be a ring. Fix an integer $n\geq1$.  If $H=\left(  h_{i,j}\right)  _{1\leq i\leq n,\ 1\leq j\leq n}\in \Ring^{n\times
n}$ is an $n\times n$-matrix over $\Ring$, then we define an $\left(
n+1\right)  \times\left(  n+1\right)  $-matrix $C_{H}\in \Ring^{\left(
n+1\right)  \times\left(  n+1\right)  }$ by
\Eq{CH}
$$\eeq{C_{H}    =\left(
\begin{cases}
h_{i,j}, & \text{if }i,j\in\left[  n\right]  ;\\
1, & \text{if exactly one of }i,j\text{ belongs to }\left[  n\right]  ;\\
0, & \text{if }i=j=n+1
\end{cases}
\right)  _{1\leq i\leq n+1,\ 1\leq j\leq n+1}\\
 =\left(
\begin{array}
[c]{ccccc}%
h_{1,1} & h_{1,2} & \cdots & h_{1,n} & 1\\
h_{2,1} & h_{2,2} & \cdots & h_{2,n} & 1\\
\vdots & \vdots & \ddots & \vdots & \vdots\\
h_{n,1} & h_{n,2} & \cdots & h_{n,n} & 1\\
1 & 1 & \cdots & 1 & 0
\end{array}
\right)  .}
$$
Observe that our earlier matrices $C\ps n = C_R$, as in \eq{lCM}, and $C_A$, as in \eq{CA}, are both of this form based, respectively, on the $n \times n $ matrices
\Eq{RA}
$$\req{R = \omatrix{0&r_{12}^2&r_{13}^2&\ldots&r_{1n}^2\sstrut6\\
r_{12}^2&0&r_{23}^2&\ldots&r_{2n}^2\sstrut6\\
r_{13}^2&r_{23}^2&0&\ldots&r_{3n}^2\\
\vdots&\vdots&\vdots&\ddots&\vdots\\
%r_{1,n-1}^2&r_{2,n-1}^2&r_{3,n-1}^2&\ldots&r_{n-1,n}^2\\
r_{1n}^2&r_{2n}^2&r_{3n}^2&\ldots&0},\\
A=\omatrix{\alpha_{1} & 0 & 0&\ldots & 0\\
0 & \alpha_{2} & 0& \ldots & 0\\
0 & 0& \alpha_{3} & \ldots & 0\\
\vdots & \vdots &\vdots& \ddots & \vdots \\
0 & 0 & 0&\ldots & \alpha_{n}},}$$
the former being the Euclidean distance matrix \eq{EDM}.

We will work in the polynomial ring
\Eq{Ring}
$$\Ring = \mathbb{Z}\left[  \left\{  s_{i,j}\, \mid\, i,j\in\left[  n\right]  \right\}
\cup\left\{  t_{k,l}\, \mid\, k,l\in\left[  n\right]  \right\}  \right],
$$
consisting of polynomials with integer coefficients depending on the $n^{2}+n^{2} = 2\:n^2$ independent variables $s_{i,j},t_{k,l}$.
Define the corresponding pair of $n\times n$-matrices 
\Eq{ST}
$$\req{S=\left(  s_{i,j}\right)  _{1\leq
i\leq n,\ 1\leq j\leq n}\in
\Ring^{n\times n},\\T=\left(  t_{i,j}\right)  _{1\leq i\leq
n,\ 1\leq j\leq n}\in
\Ring^{n\times n},}$$
which we use to construct the $\left(n+1\right)  \times\left(n+1\right)  $ matrices $C_S$ and $C_T$ via \eq{CH}.  

Next, let $E$ be the set of all $2$-element subsets of $\ive{n}$;
we regard these subsets as unordered pairs of distinct elements of
$\ive{n}$. Note that $\abs{E} = n\left(  n-1\right)/2$.
Our generalization of the $n$ body matrix will be the matrix $W_{S,T}\in \Ring^{E\times E}$ --- that is, a matrix whose
rows and columns are indexed by elements of $E$ --- whose entries are given by
%$W=\left(  w_{\set ij, \set kl }\right)  _{\set ij  \in E,\ \set kl  \in E}$, where
\Eq{wijkl}
$$w_{\set ij  ,\set kl  }=\left(  t_{j,k}+t_{i,l}%
-t_{i,k}-t_{j,l}\right)  \left(  s_{j,k}+s_{i,l}-s_{i,k}-s_{j,l}\right).$$
%for all $\left(  i,j\right)  \in\left[  n\right]  ^{2}$ and $\left(
%k,l\right)  \in\left[  n\right]  ^{2}$ satisfying $i<j$ and $k<l$.
It is easy to see that \eq{wijkl} is well defined for any
$\set ij ,\set kl \in E$, since
% (not necessarily requiring $i<j$ and $k<l$). Indeed, it suffices to check that
 the right hand side is unchanged when $i$ is switched with $j$, and is also
unchanged when $k$ is switched with $l$.
We also remark that the factor $s_{j,k}+s_{i,l}-s_{i,k}-s_{j,l}$ on the right hand
side of \eq{wijkl} can be rewritten as 
$$\det\left(
\begin{matrix}
s_{j,l} & s_{j,k} & 1\\
s_{i,l} & s_{i,k} & 1\\
1 & 1 & 0
\end{matrix}
\right)  =\det\left(  C_{S\left[  j,i\mid l,k\right]  }\right)  ,\roq{where}
S\left[  j,i\mid l,k\right]  = \left(
\begin{matrix}
s_{j,l} & s_{j,k}\\
s_{i,l} & s_{i,k}%
\end{matrix}
\right) ,$$
% where
%$S\left[  j,i\mid l,k\right]  $ is the $2\times2$-matrix $\left(
%\begin{matrix}
%s_{j,l} & s_{j,k}\\
%s_{i,l} & s_{i,k}%
%\end{matrix}
%\right)  $. 
and similarly for the first factor $t_{j,k}+t_{i,l}-t_{i,k}-t_{j,l}$.  Thus, each entry of $W_{S,T}$ is the product of the determinants of a pair of $3 \times 3$ matrices that also have our basic form \eq{CH}.

Since $W_{S,T}$ is a square matrix of size $\abs{E} \times \abs E$, it has a determinant $\det W_{S,T}\in \Ring$.  The main result of this sections is its divisibility:

\Th{Wfactor} We have $\left( \det C_{S}\right) \left( \det  C_{T}\right)
\mid\det W_{S,T}$ in $\Ring$.

Notice that \th{Wfactor} is a divisibility in $\Ring$. Thus, the
quotient is a polynomial $Z_{S,T} = \det W_{S,T}/\left( \det C_{S}\, \det  C_{T}\right)$, with integer coefficients, in the independent variables $s_{i,j},t_{i,j}$.  Thus,
\Eq{dWST}
$$\det W_{S,T} = \left( \det C_{S}\right) \left( \det  C_{T}\right) Z_{S,T}.$$
Observe that, whereas the left hand side of \eq{dWST} depends on all $2\:n^2$ variables, the first factor depends only on the $s_{i,j}$ and the second factor only on the $t_{k,l}$, while the final factor is, in general, a ``mixed'' function of both sets of variables. 
As in \th{factor}, the factorization \eq{dWST} is existential, and we do not have a direct formula for the mixed factor $Z_{S,T}$.  Finding such a formula  and giving it an algebraic or geometric interpretation is an outstanding and very interesting problem.

If we now specialize  $S \mapsto  R$ and $T \mapsto A$, where $R,A$ are the matrices \eq{RA}, then $W_{S,T}$ reduces to 
\Eq{WB}
$$W_{R,A} = - B\ps n,$$
where $B\ps n$ is the $n$ body matrix defined by \eqs{Bd}{Bo}, and thus the left hand side of \eqf{dWST} reduces to the $n$ body determinant 
\Eq{dWB}
$$\det W_{R,A} = (-1)^{n \choose 2}\det B\ps n = (-1)^{n \choose 2}\Delta \ps n.$$  
On the other hand, we use \eq{CMd}  to identify $\det C_R$ with the Cayley--Menger determinant $\delta\ps n$, and \eq{dCA} to identify $\det C_A$ with the negative of the elementary symmetric polynomial $-\,e_{n-1}(\alpha )$.  Thus, the general factorization \eqf{dWST} reduces to the $n$ body determinant factorization \eqf{factor} where the mass-dependent factor
\Eq{sZ} 
$$\sigma \ps n = (-1)^{{n \choose 2} + 1}\,Z_{R,A}$$
 is identified with the corresponding reduction of the mixed factor in \eq{dWST}.  Thus, \th{Wfactor} immediately implies the Factorization \th{factor} upon specialization.  Again, we do not have a direct formula for constructing either factor $Z_{S,T}$ or $Z_{R,A}$.

%\begin{proof}[Proof of \th{factor}.]
%If we substitute
%$s_{i,j}=
%\begin{cases}
%r_{ij}^{2}, & \text{if }i<j;\\
%r_{ji}^{2}, & \text{if }i>j;\\
%0 & \text{if }i=j
%\end{cases}
%$ for all $i,j\in\left[  n\right]  $, then the matrix $C_S$ reduces to
%the matrix $C_R$ in \eq{lCM}, whereas the matrix $\overline{W}$ reduces to
%the $n$ body matrix $B^{\left(  n\right)  }$; therefore, the divisibility
%$\esp{n-1}(\alpha) \det \left(C_S\right) \mid \det \overline{W}$
%from \th{thm.detBa}
%transforms into
%$\esp{n-1}(\alpha) \det\left(  C^{\left(  n\right)  }\right)  \mid\det\left(  B^{\left(
%n\right)  }\right)$,
%that is,
%$\esp{n-1}(\alpha)\, \delta \ps n(r) \mid
%\Delta^{\left(  n\right)  }$, which proves \th{factor}.
%\end{proof}

Our proof of \th{Wfactor} will rely on basic properties of
UFDs (unique factorization domains), which are found in most texts on
abstract algebra, e.g., \cite[Section VIII.4]{Knapp}.
We shall also use the fact that any polynomial ring (in finitely many
variables) over $\mathbb{Z}$ is a UFD. (This follows, \textit{e.g.},
from \cite[Corollary 8.21]{Knapp} by induction on the number of
variables.)
% This is also true for infinitely many variables, but it's harder to
% find a reference for it.
Moreover, we shall use the fact (obvious from degree considerations)
that the only units (\textit{i.e.}, invertible elements)
of a polynomial ring are constant polynomials.
Hence, a polynomial $p$ in a polynomial ring
$\mathbb{Z}\left[x_1, x_2, \ldots, x_k\right]$
is irreducible if its \emph{content}, \textit{i.e.}, the gcd of its coefficients, is $1$ and $p$ is irreducible in
the ring $\mathbb{Q}\left[x_1, x_2, \ldots, x_k\right]$
(since any constant factor of $p$ in
$\mathbb{Z}\left[x_1, x_2, \ldots, x_k\right]$
must divide the content of $p$).

Before we prove \th{Wfactor}, we require a technical lemma:

\begin{Lemma}
\label{lem.detC-irr}Assume that $n>1$. Then, $\det C_S  $ is a
prime element of the UFD $\Ring$.
\end{Lemma}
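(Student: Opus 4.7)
The plan is to prove $\det C_{S}$ is irreducible in the sub-ring $\mathbb{Z}[\{s_{i,j}\}]$ of $\Ring$ and then extend primality to $\Ring$ itself. Since $\det C_{S}$ involves none of the $t_{k,l}$'s, and adjoining variables to a domain preserves the domain property, any element prime in the sub-ring remains prime in the polynomial extension. Hence it suffices to show that $\det C_{S}$ is irreducible in the UFD $\mathbb{Z}[\{s_{i,j}\}]$. By Gauss's Lemma this splits into: (i) verifying that $\det C_{S}$ is \emph{primitive}, and (ii) establishing irreducibility over $\mathbb{Q}[\{s_{i,j}\}]$.

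For (i), use the double cofactor expansion
$$\det C_{S}=-\sum_{i,k=1}^{n}(-1)^{i+k}\det S\bigl[\left[n\right]\setminus\{i\},\,\left[n\right]\setminus\{k\}\bigr]$$
obtained by expanding $C_{S}$ along its last row and last column. The monomial $s_{2,2}s_{3,3}\cdots s_{n,n}$ arises only from $(i,k)=(1,1)$ together with the identity permutation on $\{2,\ldots,n\}$, and its coefficient is $-1$. Hence the content of $\det C_{S}$ equals $1$.

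For (ii) I induct on $n$. The base case $n=2$ gives $\det C_{S}=-s_{1,1}+s_{1,2}+s_{2,1}-s_{2,2}$, a primitive linear polynomial, hence irreducible. For $n\geq 3$, cofactor expansion at the $(1,1)$ entry produces
$$\det C_{S}=s_{1,1}\cdot D+\beta,$$
where $D=\det C_{S_{22}}$ for the lower-right block $S_{22}=(s_{i,j})_{2\leq i,j\leq n}$, and $\beta$ is independent of $s_{1,1}$. By the inductive hypothesis $D$ is irreducible in $\mathbb{Q}[\{s_{i,j}:i,j\geq 2\}]$ and therefore also in the larger $\mathbb{Q}[\{s_{i,j}\}]$. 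Now suppose $\det C_{S}=pq$ is a nontrivial factorization in $\mathbb{Q}[\{s_{i,j}\}]$. Linearity in $s_{1,1}$ forces this variable to appear in only one factor, say $p$; differentiation yields $D=q\cdot\partial p/\partial s_{1,1}$, so $q\mid D$. Irreducibility of $D$ then forces either $q$ constant (contradicting nontriviality) or $q=\alpha D$ for some $\alpha\in\mathbb{Q}^{\times}$, in which case $D\mid\det C_{S}$, and setting $s_{1,1}=0$ gives $D\mid\beta$.

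To refute $D\mid\beta$, specialize $s_{1,j}\mapsto 0$ for all $j\geq 2$ and $s_{i,1}\mapsto 0$ for all $i\geq 2$, leaving the entries of $S_{22}$ generic. This leaves $D$ unchanged, while the first row and first column of $C_{S}\vert_{s_{1,1}=0}$ become zero except for their terminal $1$'s; a two-step cofactor expansion (first along the first column, then along the first row of the resulting $n\times n$ minor) collapses $\beta$ into $-\det S_{22}$. Thus $D\mid\beta$ would force $D\mid\det S_{22}$ in $\mathbb{Q}[\{s_{i,j}:i,j\geq 2\}]$. But $\det S_{22}$ is the classical irreducible generic determinant of total degree $n-1$, whereas $D$ is irreducible of total degree $n-2\geq 1$, so no unit-multiple relation is possible. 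This contradiction finishes the induction. The main obstacle is choosing the specialization to keep $D$ intact while degenerating $\beta$ into an object (here $\pm\det S_{22}$) whose non-divisibility by $D$ is transparent from known irreducibility and a degree count; once that is arranged, Gauss's Lemma and the extension argument complete the proof.
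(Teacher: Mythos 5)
Your proof is correct, and while its outer skeleton (content $1$, irreducibility over $\mathbb{Q}$, Gauss's lemma, then passing from $\mathbb{Z}[\{s_{i,j}\}]$ to the full UFD $\Ring$ and from irreducible to prime) matches the paper, the heart of the argument --- irreducibility of $\det C_S$ over $\mathbb{Q}$ --- is handled by a genuinely different route. The paper expands $\det C_S$ over permutations of $\left[n+1\right]$, observes that the surviving monomials are pairwise distinct with coefficients $\pm 1$ and use at most one variable per row and column, and then adapts the argument of D\'esarm\'enien--Kung--Rota for the generic determinant $\det S$ directly to $\det C_S$. You instead run an induction on $n$: linearity in $s_{1,1}$ writes $\det C_S = s_{1,1}D+\beta$ with $D=\det C_{S_{22}}$ irreducible by the inductive hypothesis; a hypothetical nontrivial factorization forces (via the derivative in $s_{1,1}$) one factor to be an associate of $D$, hence $D\mid\beta$; and your specialization killing the first row and column of $S$ collapses $\beta$ to $\pm\det S_{22}$, which $D$ cannot divide by the classical irreducibility of the generic determinant and a degree count ($n-2$ versus $n-1$). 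I checked the details: the double cofactor expansion and the coefficient $-1$ of $s_{2,2}\cdots s_{n,n}$ are right, the coefficient of $s_{1,1}$ really is $\det C_{S_{22}}$, the specialization computation giving $-\det S_{22}$ is correct, and the transfer of primality to $\Ring$ via $\Ring/(\det C_S)\cong(\mathbb{Z}[\{s_{i,j}\}]/(\det C_S))[\{t_{k,l}\}]$ is sound. The trade-off: the paper's route is shorter but leans on reproducing an external argument verbatim for a modified matrix, whereas yours is self-contained apart from the standard fact that the generic determinant is irreducible (used as a black box where the paper borrows its proof), at the cost of an induction and a carefully chosen specialization.
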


\begin{proof}[Proof of Lemma \ref{lem.detC-irr}.]
Expanding $\det C_S$ as a sum over all
$\left(  n+1\right)  !$ permutations $\pi$ of $\left[
n+1\right]  $, we observe that the permutations $\pi$ satisfying
$\pi\left(  n+1\right)  =n+1$ give rise to summands that equal $0$, whereas
all the other permutations $\pi$ contribute \is{pairwise distinct
monomials} to the sum\footnote{Why pairwise distinct?
The monomial corresponding to such a permutation $\pi$ is
$\prod_{i \in \left[n\right];\  \pi\left(i\right) \neq n+1} s_{i, \pi\left(i\right)}$.
Knowing this monomial, we can reconstruct the value of $\pi$
at the unique $i$ satisfying $\pi\left(i\right) = n + 1$
(namely, this value is the unique $k \in \left[n\right]$
for which no entry from the $k$-th row of $S$ appears in
the monomial), as well as the remaining values of $\pi$
on $\left[n\right]$ (by inspecting the corresponding $s_{i, j}$
in the monomial), and finally the value of $\pi$ at
$n+1$ (as the remaining element of $\left[n+1\right]$).
Thus, we can reconstruct $\pi$ uniquely from this monomial.}.
This shows that the polynomial $\det C_S  $ has
content $1$; indeed, each of its nonzero coefficients is $1$ or $-1$.
Moreover, it shows that
$\det C_S$ is a polynomial of degree $1$ in each
of the indeterminates $s_{i, j}$
(not $0$ because $n > 1$).
Furthermore,
in the expansion of $\det C_S$ into monomials, each monomial
contains at most one variable from each row and at most one from each column.
Thus, the same argument that is used in \cite[proof of Lemma 5.12]{DeKuRo17}
to prove the irreducibility of $\det S$ can be used to see that
$\det C_S$ is an irreducible element
of the ring $\mathbb{Q} \left[ s_{i,j} \mid i, j \in \left[ n \right] \right]$.
Hence, since $\det C_S$ has content $1$, it is an irreducible element of the ring
$\Ring _0 = \mathbb{Z} \left[ s_{i,j} \mid i, j \in \left[ n \right] \right]$
as well.
Hence, $\det C_S$ is also an irreducible element of the ring $\Ring$ (which
differs from $\Ring _0 $ merely in the introduction of $n^{2}$ new variables $t_{i,j}$, which clearly do not contribute any possible divisors to $\det C_S$). Since $\Ring$ is a UFD, we thus conclude that $\det C_S$ is a prime element of $\Ring$.
\end{proof}

\begin{proof}
[Proof of \th{Wfactor}.]If $n=1$, then \th{Wfactor} is
clear, since $\det C_S  =-1$ and $\det C_{T}=-1$
in this case. Thus, without loss of generality, assume that $n>1$.

Since $\Ring$ is a polynomial ring over $\mathbb{Z}$, it is a UFD.
Moreover, Lemma \ref{lem.detC-irr} yields that $\det C_S  $ is
a prime element of $\Ring$. Similarly, $\det C_{T}$ is a prime
element of $\Ring$.

Let $\Quo = \Ring/\det C_S $ be the quotient ring, which is an integral domain since $\det C_S  $ is a prime
element of $\Ring$.
Since, by construction, $\det C_S  =0$ in $\Quo$, the matrix $C_{S}$ is singular over $\Quo$ and hence has a nontrivial kernel because $\Quo$ is an
integral domain. In other words, there exists a nonzero vector $\o \ne \x^* = \left(
x_{1},x_{2},\ldots,x_{n},v\right)  ^{T}\in \Quo^{n+1}$ such that
\Eq{CQx0} 
$$C_{S}\x^*=\o.$$
The entries of the vector identity \eq{CQx0} imply\fnote{Here and in the following, \textquotedblleft$\sum_{l}$\textquotedblright%
\ always means \textquotedblleft$\sum_{l=1}^{n}$\textquotedblright.}
\Eq{pft4}
$$\qeq{\sum_{l}s_{i,l}x_{l}+v=0\\\text{for all }i\in\left[  n\right],\\\quad
\text{and} \qquad
\sum_{l}x_{l}=0.}$$
%We can
%rewrite \eqref{pf.t4.4a} as follows:
%\begin{equation}
%\sum_{l}s_{i,l}x_{l}=-y\ \ \ \ \ \ \ \ \ \ \text{for all }i\in\left\{
%1,2,\ldots,n\right\}  .
%\end{equation}
Given such an $\x^*$, let $\mathbf{x} = \left(  x_{1},x_{2},\ldots,x_{n}\right)^{T}\in \Quo^{n}$ be the vector obtained by omitting the last entry. If $\x=\o$, then, according to the first equations in \eq{pft4}, this would require $v = 0$, which would contradict the fact that $\x^* \ne \o$.  Thus, $\x \ne \o$, which, by the last equation in \eq{pft4}, implies
that $\x$ has at least two nonzero entries, so $x_{i}x_{j} \ne 0$ for some $i \ne j$, since $\Quo$ is an integral domain.

Define the vector $\z\in \Quo^E$ whose entries are  indexed by unordered pairs $\set ij  \in E$ and given by the products of distinct entries of $\x$, so 
%$z_{\set ij} = x_{i}x_{j}$ for $\set ij  \in E$.
$$\req{z_{\set ij} = x_{i}x_{j}, \\ \set ij  \in E.}$$
Hence, $\z\neq\o$ (since $x_{i}x_{j} \ne 0$ for some $i \ne j$).

Let us abbreviate $W = W_{S,T}$. We shall prove that $\o \ne \z \in \ker W$. To this end, for any $1\leq i<j\leq n$, the $\set ij  $-th entry of the vector $W\z$ is
\begin{align*}
&  \underbrace{\sum_{\set kl  \in E}}_{=\sum_{k<l}}%
\underbrace{w_{\set ij  ,\set kl  }}_{\substack{=\left(
t_{j,k}+t_{i,l}-t_{i,k}-t_{j,l}\right)  \left(  s_{j,k}+s_{i,l}-s_{i,k}%
-s_{j,l}\right)  \\\text{(by \eq{wijkl})}}}x_{k}x_{l}\\
&  =\sum_{k<l}\underbrace{\left(  t_{j,k}+t_{i,l}-t_{i,k}-t_{j,l}\right)
}_{=\left(  t_{j,k}-t_{i,k}\right)  -\left(  t_{j,l}-t_{i,l}\right)  }\left(
s_{j,k}+s_{i,l}-s_{i,k}-s_{j,l}\right)  x_{k}x_{l}\\
&  =\sum_{k<l}\left(  \left(  t_{j,k}-t_{i,k}\right)  -\left(  t_{j,l}%
-t_{i,l}\right)  \right)  \left(  s_{j,k}+s_{i,l}-s_{i,k}-s_{j,l}\right)
x_{k}x_{l}\\
&  =\sum_{k<l}\left(  t_{j,k}-t_{i,k}\right)  \left(  s_{j,k}+s_{i,l}%
-s_{i,k}-s_{j,l}\right)  x_{k}x_{l}-\sum_{k<l}\left(  t_{j,l}-t_{i,l}\right)
\underbrace{\left(  s_{j,k}+s_{i,l}-s_{i,k}-s_{j,l}\right)  }_{=-\left(
s_{j,l}+s_{i,k}-s_{i,l}-s_{j,k}\right)  }
\underbrace{x_{k}x_{l}}_{= x_l x_k}\\
&  =\sum_{k<l}\left(  t_{j,k}-t_{i,k}\right)  \left(  s_{j,k}+s_{i,l}%
-s_{i,k}-s_{j,l}\right)  x_{k}x_{l}+\sum_{k<l}\left(  t_{j,l}-t_{i,l}\right)
\left(  s_{j,l}+s_{i,k}-s_{i,l}-s_{j,k}\right)  x_l x_k\\
&  =\sum_{k<l}\left(  t_{j,k}-t_{i,k}\right)  \left(  s_{j,k}+s_{i,l}%
-s_{i,k}-s_{j,l}\right)  x_{k}x_{l}+\sum_{k>l}\left(  t_{j,k}-t_{i,k}\right)
\left(  s_{j,k}+s_{i,l}-s_{i,k}-s_{j,l}\right)  x_{k}x_{l}\\
&  \qquad\left(  \text{here, we switched the roles of }k\text{ and }l\text{ in
the second sum}\right) \\
&  =\sum_{k,l\in\left[  n\right]  }\left(  t_{j,k}-t_{i,k}\right)  \left(
s_{j,k}+s_{i,l}-s_{i,k}-s_{j,l}\right)  x_{k}x_{l}\\
&  \qquad\left(
\begin{array}
[c]{c}%
\text{here, we have combined the two sums, while also including extra}\\
\text{terms for }k=l\text{ (which don't change the sum since they are
}0\text{)}%
\end{array}
\right) \\
&  =\sum_{k\in\left[  n\right]  }\left(  t_{j,k}-t_{i,k}\right)
x_{k}{\sum_{l\in\left[  n\right]  }\left(  s_{j,k}+s_{i,l}%
-s_{i,k}-s_{j,l}\right)  x_{l}}
%_{=s_{j,k}\sum_{l\in\left[  n\right]  }%
%x_{l}+\sum_{l\in\left[  n\right]  }s_{i,l}x_{l}-s_{i,k}\sum_{l\in\left[
%n\right]  }x_{l}-\sum_{l\in\left[  n\right]  }s_{j,l}x_{l}}
\\
&  =\sum_{k\in\left[  n\right]  }\left(  t_{j,k}-t_{i,k}\right)  x_{k}\left(
s_{j,k}\underbrace{\sum_{l\in\left[  n\right]  }x_{l}}%
_{\substack{=0}}+\underbrace{\sum_{l\in\left[
n\right]  }s_{i,l}x_{l}}_{\substack{=-v}%
}-s_{i,k}\underbrace{\sum_{l\in\left[  n\right]  }x_{l}}%
_{\substack{=0}}-\underbrace{\sum_{l\in\left[
n\right]  }s_{j,l}x_{l}}_{\substack{=-v%
}}\right) \\
&  \qquad\left(
\begin{array}
[c]{c}%
\text{here, we have used the equations in \eq{pft4} on each set of terms}%
\end{array}
\right) 
\end{align*}%
\vglue-20pt
\begin{align*}
&  =\sum_{k\in\left[  n\right]  }\left(  t_{j,k}-t_{i,k}\right)
x_{k}\underbrace{\left(  s_{j,k}0+\left(  -v\right)  -s_{i,k}0-\left(
-v\right)  \right)  }_{=0}=0.\hskip150pt \,
\end{align*}
Hence, $W\z=\o$. Since $\z$ is a nonzero vector, this
shows that $W$ has a nontrivial kernel over $\Quo$. Since $\Quo$ is an
integral domain, we thus conclude that $\det W=0$ in $\Quo$. In other words,
$\det C_S  \mid\det W$. The same argument shows that
$\det C_T \mid\det W$ also, since $S$ and $T$ play symmetric roles
in the definition of the matrix $W$.

Finally, we note that the two prime elements $\det C_S  $ and $\det C_T $ of $\Ring$ are distinct --- indeed, they are polynomials in disjoint
sets of indeterminates $s_{i,j}$ and $t_{k,l}$, respectively, so
they could only be equal if they were both constant, which they are not.
Thus, they are coprime. Hence, an element of the UFD $\Ring$ divisible both by
$\det C_S  $ and by $\det C_T $ must also be
divisible by their product $\det C_S  \det C_T  $. Applying this to the element $\det W$ completes the proof of the General Factorization \th{Wfactor}.
\end{proof}

A further generalization of \th{Wfactor} will be given in the upcoming work \rf{Grinberg}.

%\Section q Generalizing \th{Q}.

\Section q A Biquadratic Form Identity.

In this section we establish a striking identity involving the matrix $W_{S,T}$, which naturally defines a biquadratic form that factorizes over a particular pair of hyperplanes.  The reduction of this formula to the $n$ body matrix is also of note.

As above, let $W = W_{S,T} \in \Ring^{E\times E}$ be the $\abs{E} \times \abs E$ matrix  whose entries are given by \eq{wijkl}.  Let $\Alg$ be a commutative $\Ring$-algebra. Define the biquadratic form 
\Eq{q22}
$$Q_W(\x,\y) = \sum_{\set ij  \in E}\sum_{\set kl  \in E}
w_{\set ij ,\set kl  } x_{i}x_{j}y_{k}y_{l},$$
where $\x=\left(  x_{1},x_{2},\ldots,x_{n}\right)  ^{T}$
and $\y
=\left(  y_{1},y_{2},\ldots,y_{n}\right)  ^{T}$ are vectors in $\Alg  ^{n}$.

\Th{quartic-gen} When $x_{1}+x_{2}+\cdots+x_{n}=0$ and
$y_{1}+y_{2}+\cdots+y_{n}=0$,  the biquadratic form
\eq{q22} factors into a product of two elementary bilinear forms\fnote{The ${}^T$ superscripts are transposes of column vectors, and have nothing to do with the matrix $T$.} based on the matrices $S,T$ given in \eq{ST}:
\Eq{q22f} 
$$Q_W(\x,\y) =  (  \x^{T}\,S\,\y)  \,( \x^{T}\,T\,\y)  .
$$

%Applying this to $\mathbf{y}=\mathbf{x}$ and using the same specialization
%$W\rightarrow\overline{W}\rightarrow B^{\left(  n\right)  }$ as in Subsection
%\ref{subsect.factor2}, we recover \th{Q}.

\begin{proof}
We calculate
\begin{align*}
& Q_W(\x,\y) =   \underbrace{\sum_{\set ij  \in E}}_{=\sum_{i<j}}\underbrace{\sum
_{\set kl  \in E}}_{=\sum_{k<l}}\underbrace{w_{\set ij
,\set kl  }}_{\substack{=\left(  t_{j,k}+t_{i,l}-t_{i,k}%
-t_{j,l}\right)  \left(  s_{j,k}+s_{i,l}-s_{i,k}-s_{j,l}\right)  \\\text{(by
\eq{wijkl})}}}x_{i}x_{j}y_{k}y_{l}\\
&  =\sum_{i<j}\sum_{k<l}\left(  t_{j,k}+t_{i,l}-t_{i,k}-t_{j,l}\right)
\left(  s_{j,k}+s_{i,l}-s_{i,k}-s_{j,l}\right)  x_{i}x_{j}y_{k}y_{l}\\
&  =\sum_{i<j}\sum_{k<l}t_{j,k}\left(  s_{j,k}+s_{i,l}-s_{i,k}-s_{j,l}\right)
x_{i}x_{j}y_{k}y_{l}+\underbrace{\sum_{i<j}\sum_{k<l}t_{i,l}\left(
s_{j,k}+s_{i,l}-s_{i,k}-s_{j,l}\right)  x_{i}x_{j}y_{k}y_{l}}_{\substack{=\sum
_{i>j}\sum_{k>l}t_{j,k}\left(  s_{i,l}+s_{j,k}-s_{j,l}-s_{i,k}\right)
x_{j}x_{i}y_{l}y_{k}\\\text{(here, we have swapped }i\text{ with }j\text{,}\\
\text{and also swapped }k\text{ with }l\text{)}}}\\
&  \qquad-\underbrace{\sum_{i<j}\sum_{k<l}t_{i,k}\left(  s_{j,k}%
+s_{i,l}-s_{i,k}-s_{j,l}\right)  x_{i}x_{j}y_{k}y_{l}}_{\substack{=\sum
_{i>j}\sum_{k<l}t_{j,k}\left(  s_{i,k}+s_{j,l}-s_{j,k}-s_{i,l}\right)
x_{j}x_{i}y_{k}y_{l}\\\text{(here, we have swapped }i\text{ with }j\text{)}%
}}-\underbrace{\sum_{i<j}\sum_{k<l}t_{j,l}\left(  s_{j,k}+s_{i,l}%
-s_{i,k}-s_{j,l}\right)  x_{i}x_{j}y_{k}y_{l}}_{\substack{=\sum_{i<j}%
\sum_{k>l}t_{j,k}\left(  s_{j,l}+s_{i,k}-s_{i,l}-s_{j,k}\right)  x_{i}%
x_{j}y_{l}y_{k}\\\text{(here, we have swapped }k\text{ with }l\text{)}}}\\
&  =\sum_{i<j}\sum_{k<l}t_{j,k}\left(  s_{j,k}+s_{i,l}-s_{i,k}-s_{j,l}\right)
x_{i}x_{j}y_{k}y_{l}+\sum_{i>j}\sum_{k>l}t_{j,k}\underbrace{\left(
s_{i,l}+s_{j,k}-s_{j,l}-s_{i,k}\right)  }_{=s_{j,k}+s_{i,l}-s_{i,k}-s_{j,l}%
}\underbrace{x_{j}x_{i}y_{l}y_{k}}_{=x_{i}x_{j}y_{k}y_{l}}\\
&  \quad-\sum_{i>j}\sum_{k<l}t_{j,k}\underbrace{\left(  s_{i,k}%
+s_{j,l}-s_{j,k}-s_{i,l}\right)  }_{=-\left(  s_{j,k}+s_{i,l}-s_{i,k}%
-s_{j,l}\right)  }\underbrace{x_{j}x_{i}}_{=x_{i}x_{j}}y_{k}y_{l}-\sum
_{i<j}\sum_{k>l}t_{j,k}\underbrace{\left(  s_{j,l}+s_{i,k}-s_{i,l}%
-s_{j,k}\right)  }_{=-\left(  s_{j,k}+s_{i,l}-s_{i,k}-s_{j,l}\right)  }%
x_{i}x_{j}\underbrace{y_{l}y_{k}}_{=y_{k}y_{l}}\\
&  =\sum_{i<j}\sum_{k<l}t_{j,k}\left(  s_{j,k}+s_{i,l}-s_{i,k}-s_{j,l}\right)
x_{i}x_{j}y_{k}y_{l}+\sum_{i>j}\sum_{k>l}t_{j,k}\left(  s_{j,k}+s_{i,l}%
-s_{i,k}-s_{j,l}\right)  x_{i}x_{j}y_{k}y_{l}\\
&  \quad+\sum_{i>j}\sum_{k<l}t_{j,k}\left(  s_{j,k}+s_{i,l}-s_{i,k}%
-s_{j,l}\right)  x_{i}x_{j}y_{k}y_{l}+\sum_{i<j}\sum_{k>l}t_{j,k}\left(
s_{j,k}+s_{i,l}-s_{i,k}-s_{j,l}\right)  x_{i}x_{j}y_{k}y_{l}\\
&  =\sum_{i\neq j}\sum_{k\neq l}t_{j,k}\left(  s_{j,k}+s_{i,l}-s_{i,k}%
-s_{j,l}\right)  x_{i}x_{j}y_{k}y_{l}\\
&  \qquad\left(  \text{here, we have combined all four sums into a single
one}\right)  \\
&  =\sum_{i,j}\sum_{k,l}t_{j,k}\left(  s_{j,k}+s_{i,l}-s_{i,k}-s_{j,l}\right)
x_{i}x_{j}y_{k}y_{l}\\
&  \qquad\left(
\begin{array}
[c]{c}%
\text{here, we have inserted extraneous addends for }i=j\text{ and for
}k=l\text{,}\\
\text{which are }0\text{ and therefore don't change our sum}%
\end{array}
\right)\\
&  =\underbrace{\sum_{i,j}\sum_{k,l}t_{j,k}s_{j,k}x_{i}x_{j}y_{k}y_{l}%
}_{=\left(  \sum_{i}x_{i}\right)  \left(  \sum_{j,k,l}t_{j,k}s_{j,k}x_{j}%
y_{k}y_{l}\right)  }+\underbrace{\sum_{i,j}\sum_{k,l}t_{j,k}s_{i,l}x_{i}%
x_{j}y_{k}y_{l}}_{=\left(  \sum_{i,l}x_{i}s_{i,l}y_{l}\right)  \left(
\sum_{j,k}x_{j}t_{j,k}y_{k}\right)  }\\
&  \qquad-\underbrace{\sum_{i,j}\sum_{k,l}t_{j,k}s_{i,k}x_{i}x_{j}y_{k}y_{l}%
}_{=\left(  \sum_{l}y_{l}\right)  \left(  \sum_{i,j,k}t_{j,k}s_{i,k}x_{i}%
x_{j}y_{k}\right)  }-\underbrace{\sum_{i,j}\sum_{k,l}t_{j,k}s_{j,l}x_{i}%
x_{j}y_{k}y_{l}}_{=\left(  \sum_{i}x_{i}\right)  \left(  \sum_{j,k,l}%
t_{j,k}s_{j,l}x_{j}y_{k}y_{l}\right)  }\\
\end{align*}%
\begin{align*}
&  =\underbrace{\left(  \sum_{i}x_{i}\right)  }_{=x_{1}+x_{2}+\cdots+x_{n}%
=0}\left(  \sum_{j,k,l}t_{j,k}s_{j,k}x_{j}y_{k}y_{l}\right)  +\left(
\sum_{i,l}x_{i}s_{i,l}y_{l}\right)  \left(  \sum_{j,k}x_{j}t_{j,k}%
y_{k}\right)  \\
&  \qquad-\underbrace{\left(  \sum_{l}y_{l}\right)  }_{=y_{1}+y_{2}%
+\cdots+y_{n}=0}\left(  \sum_{i,j,k}t_{j,k}s_{i,k}x_{i}x_{j}y_{k}\right)
-\underbrace{\left(  \sum_{i}x_{i}\right)  }_{=x_{1}+x_{2}+\cdots+x_{n}%
=0}\left(  \sum_{j,k,l}t_{j,k}s_{j,l}x_{j}y_{k}y_{l}\right)  \\
&  =\underbrace{\left(  \sum_{i,l}x_{i}s_{i,l}y_{l}\right)  }_{=\mathbf{x}%
^{T}S\mathbf{y}}\ \underbrace{\left(  \sum_{j,k}x_{j}t_{j,k}y_{k}\right)
}_{=\mathbf{x}^{T}T\mathbf{y}}=\left(  \mathbf{x}^{T}S\mathbf{y}\right)
\cdot\left(  \mathbf{x}^{T}T\mathbf{y}\right)  .
\end{align*}
\vglue-20pt
\end{proof}

Let us now specialize the identity in \th{quartic-gen} to the $n$ body case, so that $C_S, C_T$ reduce, respectively, to $C_A,C_R$. We further set $\x = \y$ and then use \eq{qks}, \eq{qedm}, \eq{qqk}, \eq{MA}, \eq{AP}, and \eq{WB} to obtain the following intriguing result, reminiscent of Schoenberg's formula \eq{qqk} for the quadratic forms based on the (reduced) Euclidean distance matrices.

%Starting with the $n$ body matrix $B \ps n $ (with variable masses $m_i$), form the following homogeneous quartic polynomial:

\Th Q Define the  quadratic forms
\Eq{rpqf}
$$\req{a(\x) = \Sum in \alpha _i x_i^2 = \Sum in \frac{x_i^2}{m_i},\\
p(\x) = \Sumu{i,j} (\p_i\cdot \p_j) \:x_ix_j = \nnorm{P\,\x }^2,}$$
where $P = \rvectors \p n$ is the point configuration matrix, as in \eq{AP}.
% whose columns are the locations $\p_i$ of the masses. 
%\cf \eq P.
% @Peter: I've tried rewriting this so that it doesn't rely on the deleted equation \eq{P} anymore.
Given the $n$ body matrix $B = B\ps n(\alpha ,r)$ with entries \eqs{Bd}{Bo}, define the corresponding homogeneous quartic form\fnote{The sum is over all ordered pairs $(\set ij, \set kl)$ of unordered pairs $\set ij$ and $\set kl$.} 
\Eq{Q4}
$$\req{Q_B(\x) = \Sumu{\set ij,\set kl} b_{\set ij,\set kl}\: x_ix_jx_kx_l,\\\x = (x_1,\ldotsx, x_n)^T.}$$
Then, when restricted to the hyperplane $\H$, the $n$ body quartic $Q_B(\x)$ factors as the product of the preceding quadratic forms \eq{rpqf}\/\ro:
\Eq{Qf}
$$Q_B(\x) = 2\:a(\x)\,p(\x)\soq{when} x_1 + x_2 + \cdotsx + x_n  = 0.$$

Note that because $b_{\set ij,\set kl} = 0\sstrut5$ when $i,j,k,l$ are distinct, the $n$ body matrix $B\ps n$ is, in fact uniquely determined by its associated quartic form $Q_B(\x)$.  
%Note also the similarity between \eq{Qf} and .
%  A straightforward  calculation establishes the following remarkably simple factored expression for the $n$ body quartic polynomial.

If the point configuration is nonsingular, then, in view of \eq{AP}, the right hand side of \eq{Qf} is positive whenever $\o \ne \x \in \H$, which implies $Q_B(\x) > 0$ under the same conditions.  However, this result does not lead to the conclusion that the $n$ body matrix, which forms the coefficients of $Q_B(\x)$, is itself positive definite, and hence we needed a different approach to establish this result.

\Section c Future Directions.

As noted above, the challenge now is to determine an explicit geometrical formula for the mass-dependent factor $\sigma\ps n(\alpha ,r)$ in the $n$ body determinant factorization \eqf{factor} or, more generally, the mixed factor $Z_{S,T}$ in our generalized factorization \eqf{dWST}, and to ascertain its algebraic and/or geometric significance. Is there some as yet undetected interesting determinantal identity or algebraic structure, perhaps representation-theoretic, that underlies this factorization and its compound determinantal generalization found in \rf{Grinberg}?  Do the biquadratic and quartic form identities we found in \eqs{q22f}{Qf} provide any additional insight into these issues?

 Another important open problem is to understand the geometric structure of the associated Riemannian manifold $\Cno\sstrut8$  that prescribes the radial $n$ body Laplace--Beltrami operator constructed in \rf{MTE} on the Euclidean distance cone, and its consequences for the classical and quantum $n$ body problems.

\vskip30pt

\Ack The second author thanks Alexander Turbiner, Willard Miller, Jr., and Adrian Escobar-Ruiz for introducing him to this problem and for helpful discussions and much needed encouragement during his initial attempts to prove the conjecture.
Both authors thank Victor Reiner for enlightening discussions.  We also thank the referees for their suggestions, corrections, and inspiration to further investigate the intriguing history of Cayley--Menger determinants and matrices.

\vfill\eject

%%%%%%%%%%%%%%%%%%%%%%%%%%%%%

\end{document}